\documentclass[10pt]{IEEEtran}
\usepackage{cite}
\usepackage{amsmath,amssymb}
\usepackage{algorithmic}
\usepackage{graphicx}
\usepackage{tikz}
\usetikzlibrary{hobby,decorations.markings}
\usepackage{textcomp}
\usepackage{color}
\usepackage[english]{babel}
\usepackage{url}
\usepackage{multirow}
\usepackage{hhline}
\usepackage{booktabs}
\usepackage[linesnumbered,ruled,vlined,longend]{algorithm2e}

\usepackage[colorlinks = true,
linkcolor = blue,
urlcolor  = blue,
citecolor = blue,
anchorcolor = blue]{hyperref}

\newtheorem{mydef}{Definition}
\newtheorem{mylem}{Lemma}
\newtheorem{mypbm}{Problem}

\newtheorem{mycor}{Corollary}
\newtheorem{myprs}{Proposition}

\newcommand{\m}{\boldsymbol}

\title{Stabilizing Traffic Flow Via Autonomous Vehicles: A Less Conservative Approach}

\author{MirSaleh Bahavarnia$^\dagger$, \IEEEmembership{Member, IEEE} and Ahmad F. Taha$^\dagger$, \IEEEmembership{Member, IEEE}
	\thanks{$^\dagger$The authors are with the Department of Civil and Environmental Engineering, Vanderbilt University, 2201 West End Avenue, TN 37235, USA. Ahmad F. Taha is also affiliated with the Department of Electrical and Computer Engineering. This work was supported by the National Science Foundation (NSF) under Grant 2152450.} 
\thanks{Email addresses: \{mirsaleh.bahavarnia,ahmad.taha\}@vanderbilt.edu.}
}

\begin{document}

\maketitle

\begin{abstract}
This paper explores stabilizing traffic flow using a minimum number of autonomous vehicles (AVs) under control constraints. In contrast to most studies, we consider a \textit{heterogeneous} parameter setup scenario for human-driven vehicles (HVs) to reflect real-world differences in driving behavior. While current literature uses an $\mathcal{H}_{\infty}$-based sufficient condition to ensure the string stability of traffic flow, this often yields a conservative lower bound on the AV penetration rate to stabilize traffic flow. To reduce such conservativeness and obtain a \textit{less conservative} lower bound, we ensure the string stability of traffic flow by directly imposing the possession of no growing eigenmodes. We also systematically find a minimum number of required AVs and solve for the optimal control parameters via nonlinear optimization. We finally assess the intended conservativeness reduction via numerical simulations. Quantitatively, applying our algorithm to the homogeneous HV baseline in the literature (the result built upon an $\mathcal{H}_{\infty}$-based sufficient condition) reduces (improves) the AV penetration rate by 17.14\% while ensuring the string stability of traffic flow. We observe a trade-off between the stabilization/performance degradation and the number of utilized identical AVs. Quantitatively, our last numerical simulation corroborates that the AV penetration rate can be reduced by 61.54\% at the expense of 27.66\% higher position difference deviation from the equilibrium and a 92.47\% degradation in the real stability radius (RSR)---a metric to measure the stability robustness under the perturbation/uncertainty---associated with the aggregated linearized dynamics while ensuring the string stability of traffic flow. This trade-off helps engineers/operators make better traffic control decisions.
\end{abstract} 

\begin{IEEEkeywords}
Autonomous vehicles, constrained control, stability of linear systems, traffic control, transportation networks.  
\end{IEEEkeywords}

\section{Introduction and Motivation} \label{sec:Intro}

\IEEEPARstart{S}{tabilization} of traffic flow via autonomous vehicles (AVs) has attracted increasing attention in the past two decades \cite{cui2017stabilizing,wang2017string,stern2018dissipation,wu2018stabilizing,monteil2019mathcal,zheng2020smoothing,wang2021controllability,wang2021optimal,giammarino2021traffic,li2022cooperative,10753487,hayat2025traffic,ameli2025design}. Undoubtedly, any improvement in the stabilization of traffic flow is equivalent to achieving less fuel consumption and emissions. To ensure the traffic flow's stability and effectively attenuate the perturbations originating from the collective behavior of the human-driven vehicles (HVs) \cite{wilson2011car}, one of the standard tools is the notion of \textit{string stability} \cite{swaroop1996string}. A traffic flow is called string stable if the perturbations do not amplify through it. Mathematically, string stability is ensured if the system does not possess any growing eigenmodes. It is noteworthy that the possession of no growing eigenmodes is equivalent to the non-positivity of the spectral abscissa---the greatest real part of the eigenmodes.

The direct imposition of the string stability---or its necessary and sufficient condition---could theoretically be challenging. Therefore, for simplicity, an $\mathcal{H}_{\infty}$-based \textit{sufficient condition} on the string stability has mainly been employed in the literature \cite{cui2017stabilizing,wang2017string,wu2018stabilizing,monteil2019mathcal,10753487}. In general, imposing such a sufficient condition could lead to conservative string stable traffic flows. Here, by conservativeness, we mean an unnecessary increase in the AV penetration rate. Later on, through numerical simulations, we will demonstrate that reducing such conservativeness can lead to the utilization of a smaller number of required AVs for traffic flow stabilization. For the specific case of a circular road, the only scenario in which such an $\mathcal{H}_{\infty}$-based sufficient condition on the string stability is equivalent to the exact string stability criterion is a limit case in which the number of vehicles tends to an extremely large value \cite{cui2017stabilizing,giammarino2021traffic}. In \cite{giammarino2021traffic}, a thorough study has been conducted to visualize the conservativeness caused by an $\mathcal{H}_{\infty}$-based sufficient condition on the string stability. Precisely, they have numerically verified that the fewer the number of vehicles, the more conservative such a condition becomes.

The aforementioned observations on the $\mathcal{H}_{\infty}$-based \textit{conservativeness} motivate the methods presented in this paper. To that end, we propose a direct approach to impose the string stability by directly incorporating the non-positivity of the spectral abscissa into the traffic flow stabilization algorithm. Moreover, we provide a more realistic scenario by \textit{(i)} considering heterogeneous HVs and \textit{(ii)} imposing the rational driving constraints (RDCs) and bounds on the control parameters, also known as box constraints (BCs).

Motivated by the conservative string stable results in the literature, one can pose the following question for a mixed vehicular platoon consisting of heterogeneous HVs and identical AVs:

\textit{Q: Can we attain a less conservative string stable traffic flow via a smaller number of identical AVs?}

We investigate the posed question throughout this paper. 

\textbf{Paper Contributions.} The paper's contributions are given as follows:
\begin{itemize}
    \item For a mixed vehicular platoon consisting of heterogeneous HVs and identical AVs, we ensure the string stability of traffic flow by directly imposing the possession of no growing eigenmodes, reducing the conservativeness, and obtaining a less conservative lower bound on the AV penetration rate compared to the $\mathcal{H}_{\infty}$-based alternative in the literature.
    \item We derive the conservative and more conservative---yet less computationally challenging---theoretical lower bounds on the AV penetration rate. Such a computational challenge in the former scenario arises from the heterogeneity of HVs. Compared to the literature,  we consider a more realistic heterogeneous HV scenario and propose an algorithm generating a less conservative string stable traffic flow via a smaller number of identical AVs.
    \item Given a fixed number of heterogeneous HVs, built upon the derived theoretical lower bounds, we systematically find a minimum number of required identical AVs and solve for the optimal control parameters via nonlinear optimization techniques.
    \item Through numerical simulations, we assess the effectiveness of the proposed less conservative approach compared to the $\mathcal{H}_{\infty}$-based alternative in the literature. Furthermore, we observe a trade-off between the stabilization/performance degradation and the number of utilized identical AVs. The information extracted from such a trade-off can benefit traffic control engineers/operators in traffic control decision-making tasks.
\end{itemize}

\textbf{Paper Notations.} We represent all the vectors and matrices in boldface notation. We denote the Laplace domain variable by $s$. We represent the sets of real, positive real, and complex numbers by $\mathbb{R}$, $\mathbb{R}_{+}$, and $\mathbb{C}$, respectively. We denote the left half plane by $\mathbb{C}^{-} := \{s \in \mathbb{C}: \Re(s) \le 0\}$. We symbolize the $n \times m$ zero and $n$-dimensional identity matrices by $\m O_{n \times m}$ and $\m I_n$, respectively. For a matrix $\m M$, we denote its Frobenius norm by $\|\m M\|_F$, which is defined as $\|\m M\|_F := \sqrt{\sum_{i,j} m_{ij}^2}$. For a square matrix $\m M$, we denote its spectral abscissa---the greatest real part of the matrix's spectrum---by $\rho[\m M]$. To show the real part and absolute value of a complex number $z$, we use $\Re(z)$ and $|z|$, respectively. We use $\imath$ to denote the imaginary unit $\sqrt{-1}$. We use $\cup$ and $\cap$ to show the set union and intersection, respectively. For a set $S$, the symbols $|S|$, $\inf S$, and $\min S$ denote the cardinality, infimum, and minimum of set $S$, respectively. We show the $d$-dimensional vector of all ones and all zeros by $\mathbf{1}_d$ and $\mathbf{0}_d$, respectively. We use $\lceil . \rceil$ to represent the ceiling function. For a scalar transfer function $T(s)$, we denote its $\mathcal{H}_{\infty}$ norm by $\|T(s)\|_{\infty}$, which is defined as $\|T(s)\|_{\infty} := {\sup_{\omega > 0}}~|T(\imath \omega)|$. Superscripts $(.)^{\mathrm{GM}}$ and $(.)^{\mathrm{AM}}$ denote the geometric and arithmetic means, respectively. We represent the element-wise Hadamard product by $\odot$. 

\textbf{Paper Structure.} The remainder of the paper is structured as follows: Section \ref{VDPS} details the mixed vehicular platoon dynamics---consisting of heterogeneous HV dynamics and identical AV dynamics---and formally formulates the problem statement. Section \ref{SAV} elaborates on ensuring less conservative string stability via identical AVs. Specifically, it presents three theoretical lower bounds on the AV penetration rate: 1) the conservative lower bound, 2) the more conservative lower bound, and 3) the less conservative lower bound. The conservative lower bounds are derived based on the sufficient string stability criterion, and the less conservative lower bound is derived based on the direct string stability criterion. Section \ref{Prcre} presents an algorithm to find the less conservative lower bound on the AV penetration rate via solving for optimal control parameters subject to string stability, RDCs, and BCs. Section \ref{Con} summarizes the paper. 

\section{Mixed Vehicular Platoon Dynamics} \label{VDPS}

We consider a circular road with a single lane, no ramps, and uniform road conditions \cite{cui2017stabilizing,wu2018stabilizing,10753487}. Tab. \ref{tab:my_label} summarizes traffic flow dynamics quantities. Let us assume the following ordering of the vehicles: vehicle $j+1$ precedes (leads) vehicle $j$ for $j \in \mathbb{N}_n$ (for $j = n$, vehicle $n+1$ is defined as vehicle $1$). In this paper, we limit our attention to the case of near-equilibrium flow (i.e., local stabilization). Vehicles can be categorized into two types: \textit{(i)} heterogeneous HVs and \textit{(ii)} identical AVs. Then, we accordingly have $\mathbb{N}_n = \mathcal{I}_{\mathrm{HV}} \cup \mathcal{I}_{\mathrm{AV}}$ with $|\mathcal{I}_{\mathrm{HV}}| = n-m$ and $|\mathcal{I}_{\mathrm{AV}}| = m$. Fig. \ref{fig0} depicts a schematic of a mixed vehicular platoon consisting of $4$ heterogeneous HVs and $2$ identical AVs.

\begin{table}[!t]
    \centering
    \caption{Summary of traffic flow dynamics quantities \cite{cui2017stabilizing,wu2018stabilizing,10753487}}
    \begin{tabular}{@{}c p{2.3in}@{}}
    \toprule
        Notation & Definition \\
    \midrule
        $L$ & Road length \\
        $n$ & Number of vehicles \\
        $m$ & Number of identical AVs\\
        $n-m$ & Number of heterogeneous HVs\\
        $\gamma := \frac{m}{n}$ & AV penetration rate\\
        $\mathbb{N}_n$ & Index set associated with vehicles: $\{1,\dots,n\}$\\
        $\mathcal{I}_{\mathrm{AV}}$ & Index set associated with identical AVs\\
        $\mathcal{I}_{\mathrm{HV}}$ & Index set associated with heterogeneous HVs\\
        $t$ & Time\\
        $x_j(t)$ & $\mathrm{Position~along~the~road~(defined~modulo}~L\mathrm{)}$ \newline of the~$j$-th~$\mathrm{vehicle~at~time}~t$\\
        $v_j(t) := \dot{x}_j(t)$ & Velocity of the~$j$-th~$\mathrm{vehicle~at~time}~t$\\
        $a_j(t) := \ddot{x}_j(t)$ & Acceleration of the~$j$-th~$\mathrm{vehicle~at~time}~t$\\
        $h_j(t)$ & Spacing of the~$j$-th~$\mathrm{vehicle~at~time}~t$: \newline $x_{j+1}(t)-x_j(t)$\\
        $\dot{h}_j(t)$ & Relative velocity of the~$j$-th~vehicle at \newline $\mathrm{time}~t$: $\dot{x}_{j+1}(t)-\dot{x}_j(t)$\\
        $\m x_{\mathrm{eq}}(t) \in \mathbb{R}^n$ & Equilibrium position\\
        $\m v_{\mathrm{eq}}(t) \in \mathbb{R}^n$ & Equilibrium velocity\\
        $\m a_{\mathrm{eq}} = \mathbf{0}_n$ & Equilibrium acceleration\\
        $\m h_{\mathrm{eq}} = \frac{L}{n} \mathbf{1}_n$ & Equilibrium spacing\\
        $y_j(t)$ & Infinitesimal position difference deviation of \newline the~$j$-th~$\mathrm{vehicle~at~time}~t$: $x_j(t)-x_{\mathrm{eq},j}(t)$\\
        $u_j(t)$ & Infinitesimal velocity difference deviation of \newline the~$j$-th~$\mathrm{vehicle~at~time}~t$: $v_j(t)-v_{\mathrm{eq},j}(t)$\\
    \bottomrule
    \end{tabular}
    \label{tab:my_label}
\end{table}

\begin{figure}[t]
    \centering
\begin{tikzpicture}[x=1cm,y=1cm,font=\footnotesize]
\definecolor{avblue}{RGB}{0,114,178}      
\definecolor{hvorange}{RGB}{230,159,0}    
\definecolor{hvgreen}{RGB}{0,158,115}     
\definecolor{hvpurple}{RGB}{204,121,167}  
\definecolor{hvskyblue}{RGB}{86,180,233}  

\fill[gray!12,even odd rule] (0,0) circle (2.05) (0,0) circle (1.57);
\draw[gray!55,line width=0.5pt] (0,0) circle (2.05);
\draw[gray!55,line width=0.5pt] (0,0) circle (1.57);
\draw[white,line width=0.8pt,dash pattern=on 4pt off 3pt] (0,0) circle (1.81);

\foreach \j/\ang/\col/\kind/\anch in {
  1/-90/avblue/\mathrm{AV}/center,
  2/-30/hvorange/\mathrm{HV}_{1}/west,
  3/30/hvgreen/\mathrm{HV}_{2}/west,
  4/90/avblue/\mathrm{AV}/center,
  5/150/hvpurple/\mathrm{HV}_{3}/east,
  6/210/hvskyblue/\mathrm{HV}_{4}/east%
}{
  \begin{scope}[shift={(\ang:1.81)},rotate=\ang+90]
    \fill[black!75,rounded corners=0.3pt] (-0.24,-0.17) rectangle (0.24,0.17);
    \fill[\col,draw=white,line width=0.5pt,rounded corners=2pt] (-0.32,-0.135) rectangle (0.32,0.135);
    \fill[white!85!\col,rounded corners=0.5pt] (0.06,-0.095) rectangle (0.18,0.095);
    \fill[white!65!\col,rounded corners=0.5pt] (-0.19,-0.095) rectangle (-0.08,0.095);
  \end{scope}
  \node[align=center,anchor=\anch,text=\col!85!black] at (\ang:2.48) {Vehicle \j\\$\kind$};
}

\draw[->,>=stealth,gray!75,line width=0.8pt] (-65:1.20) arc (-65:15:1.20);
\node[align=center,font=\scriptsize] at (0,0.25) {Direction of Travel};
\node[align=center,font=\scriptsize] at (0,-0.28) {$n=6$, $m=2$};

\node[align=center,font=\scriptsize] at (0,-3.2) {
  \textcolor{avblue}{$\mathrm{AV}$}\quad Identical Autonomous Vehicle\\[3pt]
  \textcolor{hvorange}{$\mathrm{HV}_1$}\ \textcolor{hvgreen}{$\mathrm{HV}_2$}\ \textcolor{hvpurple}{$\mathrm{HV}_3$}\ \textcolor{hvskyblue}{$\mathrm{HV}_4$}\quad Heterogeneous Human-Driven Vehicles
};
\end{tikzpicture}
    \caption{A mixed platoon configuration with $n = 6$ vehicles and $m = 2$ identical AVs. The four heterogeneous HVs are assigned unique colors and index notation ($\mathrm{HV}_1$ to $\mathrm{HV}_4$) to model parametric variation, where $\mathcal{I}_{\mathrm{HV}} = \{2,3,5,6\}$ and $\mathcal{I}_{\mathrm{AV}} = \{1,4\}$.}
    \label{fig0}
\end{figure}

\subsection{Mixed vehicular platoon dynamics}

In this section, to derive theoretical stability results, we linearize the nonlinear dynamics around the equilibrium. We emphasize that the theoretical derivations will only be valid under small perturbations, as when the spacing/velocity deviations are large enough, the higher-order terms in the Taylor expansion can no longer be ignored.

\subsubsection{Heterogeneous HV dynamics}
For each heterogeneous HV, we consider the following second-order car-following dynamics:
\begin{align} \label{HVE}
    \ddot{x}_j &= f_j(h_j,\dot{h}_j,v_j),~j \in \mathcal{I}_{\mathrm{HV}}.
\end{align}Among many examples, one crucial example of car-following dynamics describable by \eqref{HVE} is the \textit{optimal-velocity-follow-the-leader} (OV-FTL) model \cite{bando1994structure,cui2017stabilizing}.

Utilizing the quantities detailed in Tab. \ref{tab:my_label} and considering dynamics \eqref{HVE} under small perturbations from the equilibrium flow, we get the following dynamics:
\begin{align} \label{LHVE}
    & \ddot{y}_j = \alpha_{j1} (y_{j+1}-y_j) - \alpha_{j2} u_j + \alpha_{j3} u_{j+1},~j \in \mathcal{I}_{\mathrm{HV}},\\
    &\alpha_{j1} = \frac{\partial f_j}{\partial h_j} \bigg{|}_{\mathrm{eq}},~\alpha_{j2} = \frac{\partial f_j}{\partial \dot{h}_j}\bigg{|}_{\mathrm{eq}}-\frac{\partial f_j}{\partial v_j}\bigg{|}_{\mathrm{eq}},~\alpha_{j3} = \frac{\partial f_j}{\partial \dot{h}_j}\bigg{|}_{\mathrm{eq}}.\notag
\end{align}   
For linearized dynamics \eqref{LHVE}, the following standard assumptions hold \cite{cui2017stabilizing}: the acceleration of vehicle $j$ is reduced when \textit{(i)} the spacing $h_j$ decreases, \textit{(ii)} the relative velocity $\dot{h}_j$ decreases, or \textit{(iii)} the vehicle's velocity $v_j$ increases. Such risk aversion criteria imply the following rational driving constraints (RDCs) \cite{wilson2011car}: $\alpha_{j1} > 0$, $\alpha_{j2} > \alpha_{j3}$, and $\alpha_{j3} > 0$. To determine the poles associated with linearized dynamics \eqref{LHVE}, one can take the Laplace transform of \eqref{LHVE} leading to the following transfer function $T_j(s) := \frac{Y_j(s)}{Y_{j+1}(s)}$:
\begin{align} \label{TFHV}
    & T_j(s) = F(s;\m \alpha_j) = \frac{\alpha_{j3} s + \alpha_{j1}}{s^2 + \alpha_{j2} s + \alpha_{j1}},~j \in \mathcal{I}_{\mathrm{HV}},
\end{align}
where $\m \alpha_j := \begin{bmatrix}
    \alpha_{j1} & \alpha_{j2} & \alpha_{j3}
\end{bmatrix}^\top$ denotes the system parameter vector for $j \in \mathcal{I}_{\mathrm{HV}}$. The Hurwitz stability of transfer function $F(s;\m \alpha_j)$ is equivalent to the simultaneous satisfaction of $\alpha_{j1} > 0$ and $\alpha_{j2} > 0$.

\subsubsection{Identical AV dynamics}

Similarly, for each AV, we consider the following second-order car-following dynamics:
\begin{align} \label{AVE}
    \ddot{x}_j &= g(h_j,\dot{h}_j,v_j),~j \in \mathcal{I}_{\mathrm{AV}}.
\end{align}Considering dynamics \eqref{AVE} under small perturbations from the equilibrium flow, we get the following linearized dynamics:
{\begin{align} \label{LAVE}
    & \ddot{y}_j = \beta_1 (y_{j+1}-y_j) - \beta_2 u_j + \beta_3 u_{j+1},~j \in \mathcal{I}_{\mathrm{AV}},\\
    &\beta_1 = \frac{\partial g}{\partial h_j} \bigg{|}_{\mathrm{eq}},~\beta_2 = \frac{\partial g}{\partial \dot{h}_j}\bigg{|}_{\mathrm{eq}}-\frac{\partial g}{\partial v_j}\bigg{|}_{\mathrm{eq}},~\beta_3 = \frac{\partial g}{\partial \dot{h}_j}\bigg{|}_{\mathrm{eq}}, \notag
\end{align}}Likewise, for linearized dynamics \eqref{LAVE}, the following standard assumptions hold \cite{cui2017stabilizing}: the acceleration of vehicle $j$ is reduced when \textit{(i)} the spacing $h_j$ decreases, \textit{(ii)} the relative velocity $\dot{h}_j$ decreases, or \textit{(iii)} the vehicle's velocity $v_j$ increases. Such risk aversion criteria imply the following RDCs \cite{wilson2011car}:
\begin{align} \label{RDCs}
    & \beta_1 > 0,~\beta_2 - \beta_3 > 0,~\beta_3 > 0.
\end{align}
Similarly, to determine the poles associated with linearized dynamics \eqref{LAVE}, one can take the Laplace transform of \eqref{LAVE} leading to the following transfer function $T_j(s) := \frac{Y_j(s)}{Y_{j+1}(s)}$:
\begin{align} \label{TFAV}
    & T_j(s) = G(s;\m \beta) = \frac{\beta_3 s + \beta_1}{s^2 + \beta_2 s + \beta_1},~j \in \mathcal{I}_{\mathrm{AV}},
\end{align}
where $\m \beta := \begin{bmatrix}
    \beta_1 & \beta_2 & \beta_3
\end{bmatrix}^\top$ denotes the control parameter vector. The Hurwitz stability of transfer function $G(s;\m \beta)$ is equivalent to the simultaneous satisfaction of $\beta_{1,2} > 0$.

It is noteworthy that in the case of ``non-identical" AVs, one can have more free control parameters to attain a broader set of control objectives practically. However, the theoretical derivations in the current work heavily depend on the simplicity of the case of identical AVs, i.e., having only $3$ free control parameters, and the stability analysis in the case of non-identical AVs is more complex. Furthermore, in the case of non-identical AVs, searching for a feasible stabilizing set of control parameters will accordingly become more complicated due to the increased time complexity of the nonlinear optimization solver.

One can obtain a state-space representation for the aggregated dynamics associated with the mixed vehicular platoon consisting of heterogeneous HVs and identical AVs by combining the heterogeneous HVs' linearized dynamics \eqref{LHVE} and the identical AVs' linearized dynamics \eqref{LAVE}. To that end, defining $\m r := \begin{bmatrix}
    y_1 & \cdots & y_n & u_1 & \cdots & u_n
\end{bmatrix}^\top$ as a state vector, the aggregated linearized dynamics can equivalently be rewritten via the following $\m \beta$-dependent closed-loop state-space representation:
\begin{subequations} \label{LD}
\begin{align}
    & \dot{\m r}(t) = \underbrace{\begin{bmatrix}
        \m O_{n \times n} & \m I_n\\\m A(\m \beta) & \m B(\m \beta)
    \end{bmatrix}}_{\m M(\m \beta)} \m r(t),\\
    & \m A(\m \beta) : \begin{cases}
       a_{jj} = -\alpha_{j1},~a_{j(j+1)} = \alpha_{j1} & j \in \mathcal{I}_{\mathrm{HV}}\\
       a_{jj} = -\beta_{1},~a_{j(j+1)} = \beta_{1} & j \in \mathcal{I}_{\mathrm{AV}}
    \end{cases},\\
    & \m B(\m \beta): \begin{cases}
       b_{jj} = -\alpha_{j2},~b_{j(j+1)} = \alpha_{j3} & j \in \mathcal{I}_{\mathrm{HV}}\\
       b_{jj} = -\beta_{2},~b_{j(j+1)} = \beta_{3} & j \in \mathcal{I}_{\mathrm{AV}}
    \end{cases},
\end{align}    
\end{subequations}where the $\m \beta$-dependent matrices $\m A(\m \beta) \in \mathbb{R}^{n \times n}$ and $\m B(\m \beta) \in \mathbb{R}^{n \times n}$ are defined based on their nonzero elements. Also, note that the subscript $j+1$ for $j=n$ is treated as $1$.

\subsection{Problem statement}

\subsubsection{String stability} A mixed vehicular platoon with linearized dynamics \eqref{LD} is said to be \textit{string stable} if infinitesimal perturbations do not amplify and the system remains close to the equilibrium \cite{wilson2011car,cui2017stabilizing}. The formal mathematical definition of string stability can be expressed as follows:
\begin{mydef}[String stability] \label{def1}
    A mixed vehicular platoon with linearized dynamics \eqref{LD} is said to be \textit{string stable} if all of its eigenmodes lie in the left half plane $\mathbb{C}^{-} := \{s \in \mathbb{C}: \Re(s) \le 0\}$.
\end{mydef}
Remarkably, we emphasize that we only have control over AVs, as there is no direct control over the behavior of HVs. Then, considering the linearized dynamics \eqref{LD}, later on, we will search for optimal control parameters to ensure the string stability.

\subsubsection{Box constraints} Let us assume the following bounds on the control parameter vector $\m \beta$ (also known as box constraints (BCs)):
\begin{align} \label{LUB}
    &\beta_1^l \le \beta_1 \le \beta_1^u,~\beta_2^l \le \beta_2 \le \beta_2^u,~\beta_3^l \le \beta_3 \le \beta_3^u.
\end{align}    
Note that $0 < \beta_i^l$ holds for all $i \in \{1,2,3\}$. For brevity, we will use notations $\m \beta^{l} := \begin{bmatrix}
    \beta_1^{l} & \beta_2^{l} & \beta_3^{l}
\end{bmatrix}^\top$ and $\m \beta^{u} := \begin{bmatrix}
    \beta_1^{u} & \beta_2^{u} & \beta_3^{u}
\end{bmatrix}^\top$ where necessary. The rationale behind the incorporation of the box constraints is that we can realize and implement the physically realizable controllers through these constraints \cite{wu2018stabilizing}. Note that actuators and sensors have hard physical limits (e.g., maximum acceleration or communication delays). Also, it is remarkable that we will not impose any additional conservativeness through the incorporation of the BCs \eqref{LUB} as our control parameters' parameterizations (in Proposition \ref{Propo1} derived later on in Section \ref{IIIB}) preserve the exact feasible stabilizing region.

Throughout this paper, we mathematically investigate the following problem:
\begin{mypbm} \label{Prob1}
    Given a mixed vehicular platoon represented by linearized dynamics \eqref{LD}, the RDCs \eqref{RDCs}, and the BCs \eqref{LUB}, find the optimal $\m \beta^{\ast}$ for which traffic flow can be stabilized with an optimally minimal AV penetration rate.
\end{mypbm}

\section{Less Conservative String Stability} \label{SAV}
This section is comprised of the following main parts: \textit{(i)} string stability criteria: 1) direct string stability criterion and 2) sufficient string stability criterion, \textit{(ii)} parameterized incorporation of the RDCs and the BCs, and \textit{(iii)} theoretical lower bounds on the AV penetration rate $\gamma$: 1) conservative lower bound, 2) more conservative lower bound, and 3) less conservative lower bound. 

\subsection{String stability criteria}

We consider two string stability criteria in this section: 1) the direct string stability criterion and 2) the sufficient string stability criterion. The former will be used to find the optimal control parameter vector $\m \beta^{\ast}$ while the latter will be used for deriving the conservative lower bounds on the AV penetration rate $\gamma$.

\subsubsection{Direct string stability criterion}
According to Definition \ref{def1}, string stability of \eqref{LD} reduces to the following direct string stability criterion:
\begin{align} \label{SpAb}
    \boxed{\rho[\m M(\m \beta)] \le 0.}
\end{align}
Throughout this paper, we will use \eqref{SpAb} to directly ensure the string stability. To derive the theoretical lower bounds on the AV penetration rate $\gamma$, we present an $\mathcal{H}_{\infty}$-based sufficient condition on the string stability---similar to the $\mathcal{H}_{\infty}$-based sufficient conditions derived by \cite{cui2017stabilizing,wu2018stabilizing,10753487} for the case of traffic flow with homogeneous HVs. We will then compare how conservative the derived theoretical lower bounds are.

\subsubsection{Sufficient string stability criterion} Considering \eqref{TFHV} and \eqref{TFAV}, and according to the periodicity of the circular road, we can solve
\begin{align} \label{PCR}
    \prod_{j \in \mathbb{N}_n} T_j(s) &=  G(s;\m \beta)^m \prod_{j \in \mathcal{I}_{\mathrm{HV}}} F(s;\m \alpha_j) = 1,
\end{align}
for the $2n$ roots that are the eigenmodes of the linearized dynamics \eqref{LD}. The $2n$ roots of \eqref{PCR} lie on a curve  $\mathcal{C} := \{s \in \mathbb{C}: |G(s;\m \beta)|^{\gamma} |F^{\mathrm{GM}}(s)|^{1-\gamma} = 1\}$ where $\gamma:= \frac{m}{n}$ denotes the AV penetration rate and $F^{\mathrm{GM}}(s) := [\prod_{j \in \mathcal{I}_{\mathrm{HV}}} F(s;\m \alpha_j)]^{\frac{1}{n-m}}$ represents the geometric mean of the heterogeneous HV transfer functions $F(s;\m \alpha_j)$. 

A sufficient condition on the string stability can be formulated as $\mathcal{C} \subset \mathbb{C}^{-}$. We define the following fractional-order transfer function:
\begin{align*}
    & H_{\gamma}(s) := G(s;\m \beta)^{\gamma} F^{\mathrm{GM}}(s)^{1-\gamma}.
\end{align*}To sufficiently ensure the string stability of the linearized dynamics \eqref{LD}, it suffices to consider $\mathcal{C} \subset \mathbb{C}^{-}$ and equivalently impose $|H_{\gamma}(\imath \omega)| \le 1$ for all $\omega \in \mathbb{R}$ (i.e., $\|H_{\gamma}(s)\|_{\infty} \le 1$) which is equivalent to the following sufficient string stability criterion:
\begin{align} \label{logEA}
    & \boxed{\gamma D_{\m \beta}(\omega) + (1-\gamma) D^{\mathrm{AM}}(\omega)  \le 0,~\forall \omega \in \mathbb{R},}\\
    & D_{\m \beta}(\omega) := \ln (|G(\imath \omega;\m \beta)|) = \frac{1}{2} \ln \bigg (\frac{\beta_3^2 \omega^2 + \beta_1^2}{\beta_2^2 \omega^2 + (\omega^2-\beta_1)^2} \bigg),\notag\\
    & D^{\mathrm{AM}}(\omega):= \frac{\sum_{j \in \mathcal{I}_{\mathrm{HV}}} D_{\m \alpha_j}(\omega)}{n-m}, \notag\\
    & D_{\m \alpha_j}(\omega) := \ln (|F(\imath \omega;\m \alpha_j)|) = \frac{1}{2} \ln\!\bigg (\frac{\alpha_{j3}^2 \omega^2 + \alpha_{j1}^2}{\alpha_{j2}^2 \omega^2 + (\omega^2-\alpha_{j1})^2} \bigg). \notag
\end{align}
Regardless of the values of $\m \beta$ and $\m \alpha_j$, \eqref{logEA} holds for $\omega = 0$. Also, since $(-\omega)^2 = \omega^2$ holds, w.l.o.g., we only consider the case of $\omega \ge 0$. Then, from now on, we assume $\omega > 0$.

Defining $\Delta_{\m \beta}:= -2\beta_1+\beta_2^2-\beta_3^2$ and $\Delta_{\m \alpha_j} := -2\alpha_{j1} + \alpha_{j2}^2-\alpha_{j3}^2$, it can be verified that
{\begin{subequations} \label{Deltas}
\begin{align}
    & D_{\m \beta}(\omega) < 0,~\forall \omega \in \mathbb{R}_{+} \iff \Delta_{\m \beta} \ge 0,\\
    & D_{\m \alpha_j}(\omega) < 0,~\forall \omega \in \mathbb{R}_{+} \iff \Delta_{\m \alpha_j} \ge 0,
\end{align}
\end{subequations}}hold. In the case of $\Delta_{\m \alpha_j} < 0$, we also have
\begin{align} \label{SignDforDeltaalpha}
    \mathrm{If}~\Delta_{\m \alpha_j} < 0 &: \begin{cases}
        D_{\m \alpha_j}(\omega) > 0, & \omega \in \Big(0,\sqrt{-\Delta_{\m \alpha_j}}\Big)\\
        D_{\m \alpha_j}(\omega) = 0, & \omega = \sqrt{-\Delta_{\m \alpha_j}}\\
        D_{\m \alpha_j}(\omega) < 0, & \omega \in \Big(\sqrt{-\Delta_{\m \alpha_j}},\infty \Big)
    \end{cases}.
\end{align}
The sufficient string stability criterion \eqref{logEA} along with \eqref{Deltas} and \eqref{SignDforDeltaalpha} will be utilized later on for deriving the theoretical lower bounds on the AV penetration rate $\gamma$.

\subsection{Parameterized Incorporation of the RDCs and the BCs} \label{IIIB}
In this section, we systematically parameterize the set of control parameters $\begin{bmatrix}
    \beta_1 & \beta_2 & \beta_3
\end{bmatrix}$ satisfying the RDCs \eqref{RDCs} and the BCs \eqref{LUB}. 

Let us define the following notations: $\mathcal{B}_1 := \{\m \beta \in \mathbb{R}_{+}^3: \eqref{RDCs} \mathrm{~holds~for~}\m \beta\}$ and $\mathcal{B}_2 := \{\m \beta \in \mathbb{R}_{+}^3: \eqref{LUB} \mathrm{~holds~for~}\m \beta\}$. The set of control parameters $\begin{bmatrix}
    \beta_1 & \beta_2 & \beta_3
\end{bmatrix}$ satisfying the RDCs \eqref{RDCs}, i.e., $\mathcal{B}_1$ can be parameterized via the parameters $\begin{bmatrix}
    p & q & r
\end{bmatrix}$ as
\begin{align} \label{pars}
    \beta_3(p) = p,~\beta_2(p,q) = p+q,~\beta_1(r) = r,
\end{align}
where $p > 0$, $q>0$, and $r > 0$ hold. In the following proposition, we systematically incorporate the BCs \eqref{LUB} into the parameterization \eqref{pars}.

\begin{myprs} \label{Propo1}
    The parameters $\begin{bmatrix}
    p & q & r
\end{bmatrix}$ in \eqref{pars} satisfying the BCs \eqref{LUB} can be parameterized via the parameters $\begin{bmatrix}
    \psi_1 & \psi_2 & \psi_3
\end{bmatrix}$ as
\begin{subequations} \label{pqrpar}
    \begin{align}
        & p = \texttt{p}(\psi_1) = (1-\psi_1)p^l + \psi_1 p^u,\\
        & q = \texttt{q}(\psi_1,\psi_2) = (1-\psi_2)q_{\psi_1}^l + \psi_2 q_{\psi_1}^u,\\
        & r = \texttt{r}(\psi_3) = (1-\psi_3)r^l + \psi_3 r^u,\\
        & p^l = \max \{\epsilon,\beta_3^l\},~p^u=\min \{\beta_3^u,\beta_2^u-\epsilon\},\\
        & q_{\psi_1}^l = \max \{\epsilon,\beta_2^l-\texttt{p}(\psi_1)\},~q_{\psi_1}^u=\beta_2^u - \texttt{p}(\psi_1),\\
        & r^l = \max \{\epsilon,\beta_1^l\},~r^u = \beta_1^u,
    \end{align}
\end{subequations}where $\psi_i \in [0,1]$ holds for all $i \in \{1,2,3\}$ and $\epsilon > 0$ is an infinitesimal value. Also, $\beta_3^u \ge \epsilon$, $\beta_2^u \ge \max \{2\epsilon,\beta_3^l+\epsilon\}$, and $\beta_1^u \ge \epsilon$ necessarily hold for $\m \beta^l$ and $\m \beta^u$.  
\end{myprs}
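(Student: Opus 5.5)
We prove that the map $(\psi_1,\psi_2,\psi_3)\mapsto(p,q,r)$ in \eqref{pqrpar}, restricted to $[0,1]^3$, has image exactly the set of $(p,q,r)$ with $p,q,r\ge\epsilon$ such that $\m\beta$ from \eqref{pars} satisfies \eqref{LUB}. Here the strict positivity $p,q,r>0$ of \eqref{pars} is replaced by the margin $\epsilon$. The first step is to rewrite the combined constraints in terms of $(p,q,r)$. Since $\beta_3=p$, $\beta_2=p+q$, $\beta_1=r$, the conditions \eqref{RDCs} and \eqref{LUB} with margin $\epsilon$ read
\begin{align*}
&\max\{\epsilon,\beta_3^l\}\le p\le \beta_3^u,\qquad \max\{\epsilon,\beta_2^l-p\}\le q\le \beta_2^u-p,\\
&\max\{\epsilon,\beta_1^l\}\le r\le \beta_1^u .
\end{align*}
The $r$-constraint is decoupled from the others. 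The $(p,q)$-constraints form a triangular region: $p$ ranges over an interval, and for each admissible $p$, $q$ ranges over an interval whose endpoints depend on $p$.

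Next we project the $(p,q)$-set onto the $p$-axis. For fixed $p$, the $q$-interval is nonempty iff $\beta_2^u-p\ge\epsilon$ and $\beta_2^u-p\ge\beta_2^l-p$. The second condition is automatic since $\beta_2^l\le\beta_2^u$. The first gives $p\le\beta_2^u-\epsilon$. Hence the admissible $p$ form exactly $[p^l,p^u]$ with $p^l=\max\{\epsilon,\beta_3^l\}$ and $p^u=\min\{\beta_3^u,\beta_2^u-\epsilon\}$. This interval is nonempty iff $\epsilon\le\beta_3^u$, $\epsilon\le\beta_2^u-\epsilon$, $\beta_3^l\le\beta_3^u$ and $\beta_3^l\le\beta_2^u-\epsilon$. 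These are precisely the stated necessary conditions $\beta_3^u\ge\epsilon$ and $\beta_2^u\ge\max\{2\epsilon,\beta_3^l+\epsilon\}$. Likewise, $[r^l,r^u]$ is nonempty iff $\beta_1^u\ge\epsilon$, given $\beta_1^l\le\beta_1^u$. So feasibility forces exactly these inequalities, which proves the "necessarily hold" part.

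Finally, the standard fact that $\psi\mapsto(1-\psi)a+\psi b$ maps $[0,1]$ onto $[a,b]$ whenever $a\le b$ gives both directions of the parameterization.
\begin{itemize}
\item $\psi_1\in[0,1]$ generates exactly $p\in[p^l,p^u]$.
\item For each such $p=\texttt{p}(\psi_1)$, the previous step shows $q^l_{\psi_1}\le q^u_{\psi_1}$. Hence $\psi_2\in[0,1]$ generates exactly the admissible $q$-fiber over that $p$.
\item $\psi_3\in[0,1]$ generates exactly $[r^l,r^u]$.
\end{itemize}
This yields surjectivity onto the feasible set, since every feasible $(p,q,r)$ is reached by choosing $\psi_1$, then $\psi_2$, then $\psi_3$. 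It also yields soundness, since every image point satisfies all the inequalities.

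The main obstacle is the fiber step: checking that $q^l_{\psi_1}\le q^u_{\psi_1}$ holds for every $\psi_1\in[0,1]$. This is exactly where $p^u\le\beta_2^u-\epsilon$ is needed, so that $\epsilon\le\beta_2^u-p$. It is also what keeps the nested (non-product) parameterization from either dropping or adding points.
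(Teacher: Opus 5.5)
Your proposal is correct and follows essentially the same route as the paper's proof (Appendix~\ref{App1}): both rewrite the constraints in $(p,q,r)$ with the $\epsilon$-margin, obtain $p\le\beta_2^u-\epsilon$ by combining $q\ge\epsilon$ with $q\le\beta_2^u-p$, read off the necessary conditions by pairing lower and upper bounds, and then use convex combinations. You are more careful than the paper in checking that each $q$-fiber is nonempty and that the nested map is both sound and surjective; one small point is that the condition $\beta_3^l\le\beta_3^u$ you list is a standing well-posedness assumption on the box, not one of the stated necessary conditions.
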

\begin{IEEEproof}
    See Appendix \ref{App1}.
\end{IEEEproof}
In \eqref{pqrpar}, since $\psi_i \in [0,1]$ holds for all $i \in \{1,2,3\}$, we can choose the form of $\psi_i$s via an arbitrary sigmoid function, e.g., the logistic function $\phi(\tau) = 1/(1+e^{-\zeta \tau})$ where $\zeta > 0$ denotes the growth rate. Any sigmoid function enables a one-to-one mapping between $(-\infty,\infty)$ and $(0,1)$ for parameterized optimization purposes with real-valued optimization variables. The following corollary immediately results from Proposition \ref{Propo1} and systematically parameterizes the set of control parameters $\begin{bmatrix}
        \beta_1 & \beta_2 & \beta_3
    \end{bmatrix}$ satisfying the RDCs \eqref{RDCs} and the BCs \eqref{LUB}.

\begin{mycor} \label{Cor1}
    Control parameters $\begin{bmatrix}
        \beta_1 & \beta_2 & \beta_3
    \end{bmatrix}$ satisfying the RDCs \eqref{RDCs} and the BCs \eqref{LUB} (any member of the set $\mathcal{B}_1 \cap \mathcal{B}_2$) can be parameterized via parameters $\begin{bmatrix}
        \theta_1 & \theta_2 & \theta_3
    \end{bmatrix}$ as
    \begin{subequations} \label{betap}
        \begin{align}
            & \beta_3(\theta_1) = \texttt{p}(\phi(\theta_1)),\\
            & \beta_2(\theta_1,\theta_2) = \texttt{p}(\phi(\theta_1)) + \texttt{q}(\phi(\theta_1),\phi(\theta_2)),\\
            & \beta_1(\theta_3) = \texttt{r}(\phi(\theta_3)),
        \end{align}
    \end{subequations}
    where $\theta_i \in \mathbb{R}$ holds for all $i \in \{1,2,3\}$, $\phi(.)$ denotes the logistic function, and $\texttt{p}(.)$, $\texttt{q}(.,.)$, and $\texttt{r}(.)$ represent the same functions expressed in \eqref{pqrpar}.
\end{mycor}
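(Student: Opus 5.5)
The corollary follows by composing Proposition \ref{Propo1} with the logistic reparameterization $\psi_i = \phi(\theta_i)$. I would prove it in three steps: check that \eqref{betap} always lands in $\mathcal{B}_1 \cap \mathcal{B}_2$, check that every admissible $\m \beta$ is reached, and then account for the endpoints that the open map $\phi$ cannot produce.

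\emph{Step 1 (parameterization into the set).} Fix arbitrary $\theta_1,\theta_2,\theta_3 \in \mathbb{R}$ and set $\psi_i := \phi(\theta_i)$. Since $\phi$ maps $\mathbb{R}$ into $(0,1) \subset [0,1]$, Proposition \ref{Propo1} applies. It gives
\begin{align*}
p = \texttt{p}(\psi_1) \in [p^l,p^u],\quad q = \texttt{q}(\psi_1,\psi_2) \in [q^l_{\psi_1},q^u_{\psi_1}],\quad r = \texttt{r}(\psi_3) \in [r^l,r^u],
\end{align*}
and these values satisfy the BCs \eqref{LUB} through \eqref{pars}. Because $p^l, q^l_{\psi_1}, r^l \ge \epsilon > 0$, we have $p,q,r>0$. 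By \eqref{pars} this yields $\beta_3 = p > 0$, $\beta_2 - \beta_3 = q > 0$, and $\beta_1 = r > 0$, which are exactly the RDCs \eqref{RDCs}. Substituting $\texttt{p},\texttt{q},\texttt{r}$ into \eqref{pars} gives \eqref{betap}. Hence $\m \beta(\m \theta) \in \mathcal{B}_1 \cap \mathcal{B}_2$.

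\emph{Step 2 (every admissible point is reached).} Take any $\m \beta \in \mathcal{B}_1 \cap \mathcal{B}_2$ and invert \eqref{pars}: $p = \beta_3$, $q = \beta_2 - \beta_3$, $r = \beta_1$. Proposition \ref{Propo1}, read as a surjectivity statement, supplies $\psi_i \in [0,1]$ reproducing these values. Concretely, each formula in \eqref{pqrpar} is affine in its own $\psi_i$ once $\psi_1$ is fixed, so the inversion is explicit and triangular:
\begin{align*}
\psi_1 = \frac{p-p^l}{p^u-p^l},\quad \psi_2 = \frac{q-q^l_{\psi_1}}{q^u_{\psi_1}-q^l_{\psi_1}},\quad \psi_3 = \frac{r-r^l}{r^u-r^l}.
\end{align*}
Whenever $\psi_i \in (0,1)$, set $\theta_i = \phi^{-1}(\psi_i) = \zeta^{-1}\ln\bigl(\psi_i/(1-\psi_i)\bigr)$, which is well defined because $\phi$ is a strictly increasing bijection $\mathbb{R}\to(0,1)$.

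\emph{Step 3 (boundary and degenerate cases) — the main obstacle.} The logistic function never attains $0$ or $1$, so points with some $\psi_i \in \{0,1\}$ lie on the faces of the box. Likewise, an interval may degenerate (e.g.\ $p^u = p^l$), in which case $\psi_i$ is arbitrary and any $\theta_i$ works. I would handle the faces in one of two ways:
\begin{itemize}
\item interpret ``parameterized'' as covering the relative interior, with the closure reached in the limit $\theta_i \to \pm\infty$, using continuity of \eqref{betap}; or
\item note that the RDCs are strict inequalities and $\epsilon$ is infinitesimal, so the excluded faces have measure zero and do not change the feasible stabilizing region.
\end{itemize}
I expect to state the corollary in the first sense, as a dense, continuous parameterization whose closure is $\mathcal{B}_1 \cap \mathcal{B}_2$, matching the paper's remark that no conservativeness is added.
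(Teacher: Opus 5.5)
Your route is the paper's. The paper obtains Corollary~\ref{Cor1} in one line by substituting $\psi_i=\phi(\theta_i)$ into Proposition~\ref{Propo1}; that is your Step~1, and that step is fine. Your Step~3 also catches something the paper glosses over. Since $\phi(\mathbb{R})=(0,1)$, no finite $\m\theta$ produces a point that needs some $\psi_i\in\{0,1\}$, i.e., a point on a face of the closed box \eqref{LUB}. An example is the corner $\hat{\m\beta}=[0.8,\,2,\,0.8]^\top$ reported in the simulations, which needs $\psi=(0,1,0)$. So the literal claim ``any member of $\mathcal{B}_1\cap\mathcal{B}_2$'' is false.

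The gap is in Step~2, and it breaks the weakened statement you plan to settle on. For a fixed $\epsilon>0$, Proposition~\ref{Propo1} is not onto. Because $q^l_{\psi_1}\ge\epsilon$, every image point has $\beta_2-\beta_3\ge\epsilon$, whereas membership in $\mathcal{B}_1\cap\mathcal{B}_2$ only gives $\beta_2-\beta_3>0$. You never check that your inverse formulas return $\psi_i\in[0,1]$, and they do not. Take the paper's box $\m\beta^l=0.8\times\mathbf{1}_3$, $\m\beta^u=2\times\mathbf{1}_3$ and the point $\m\beta=[1,\,1+\epsilon/2,\,1]^\top\in\mathcal{B}_1\cap\mathcal{B}_2$. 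It has $q=\epsilon/2<q^l_{\psi_1}=\epsilon$, so $\psi_2<0$. Similarly, any point with $\beta_3>\beta_2^u-\epsilon$ gives $\psi_1>1$.

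The consequences for your Step~3 are as follows.
\begin{itemize}
\item The missed slab $\{0<\beta_2-\beta_3<\epsilon\}$ has nonempty interior, so the image of \eqref{betap} is not dense in $\mathcal{B}_1\cap\mathcal{B}_2$.
\item Your ``measure zero'' remark covers the faces but not this slab.
\item ``Closure equal to $\mathcal{B}_1\cap\mathcal{B}_2$'' cannot hold anyway, because that set is not closed (the RDCs are strict).
\item Even for genuinely measure-zero faces, measure is the wrong criterion for the feasibility problem \eqref{Feas}. A feasible set that meets the box only on a face would be missed entirely.
\end{itemize}

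What your Steps~1--3 actually prove needs the tightened set $\mathcal{B}_1^{\epsilon}:=\{\m\beta:\beta_1\ge\epsilon,\ \beta_2-\beta_3\ge\epsilon,\ \beta_3\ge\epsilon\}$. The map of Proposition~\ref{Propo1} on $[0,1]^3$ is onto $\mathcal{B}_1^{\epsilon}\cap\mathcal{B}_2$. By continuity, the image of \eqref{betap} is therefore dense in $\mathcal{B}_1^{\epsilon}\cap\mathcal{B}_2$, missing exactly the points that can only be represented with some $\psi_i\in\{0,1\}$. The set $\mathcal{B}_1\cap\mathcal{B}_2=\bigcup_{\epsilon>0}(\mathcal{B}_1^{\epsilon}\cap\mathcal{B}_2)$ is recovered only in the limit $\epsilon\downarrow 0$. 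That limit is the real content of the paper's ``infinitesimal $\epsilon$,'' and you must state it explicitly for your conclusion to be correct.
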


We will use the following notations for parameterized optimization purposes in the sequel: $\m \theta := \begin{bmatrix}
        \theta_1 & \theta_2 & \theta_3
    \end{bmatrix}^\top$, and $\m \beta(\m \theta) := \begin{bmatrix}
        \beta_1(\theta_3) & \beta_2(\theta_1,\theta_2) & \beta_3(\theta_1)
    \end{bmatrix}^\top$. The rationale behind the parameterized control parameter vector $\m \beta(\m \theta)$ characterized by \eqref{betap} in Corollary \ref{Cor1} is that we later on utilize it to facilitate solving a cast nonlinear optimization for a feasible solution $\m \theta^{\ast}$ in Section \ref{C3} to compute a less conservative lower bound on the AV penetration rate, namely $\gamma^{\ast}$. Moreover, we highlight that such a parameterization does not limit the feasible solution space, as throughout the parameterization process, we have equivalently (without imposing any simplifying assumptions) and systematically incorporated the RDCs \eqref{RDCs} and the BCs \eqref{LUB}.

\subsection{Lower bounds on the AV penetration rate} 

This section presents three theoretical lower bounds on the AV penetration rate $\gamma$: 1) the conservative lower bound, 2) the more conservative lower bound, and 3) the less conservative lower bound. The conservative lower bounds are derived based on the sufficient string stability criterion, and the less conservative lower bound is derived based on the direct string stability criterion.

\subsubsection{Conservative lower bound on the AV penetration rate} 

The sufficient string stability criterion \eqref{logEA} can equivalently be rearranged as
\begin{subequations} \label{SuffCon}
\begin{align}
    & \Delta_{\m \beta} \ge 0, \label{SuffCon1}\\
    & \gamma \ge \overbrace{\frac{1}{1+\underbrace{\inf \Big \{ \frac{-D_{\m \beta}(\omega)}{D^{\mathrm{AM}}(\omega)}: \omega > 0, D^{\mathrm{AM}}(\omega)>0 \Big \}}_{J^{\ast}(\m \beta)}}}^{K^{\ast}(\m \beta)}. \label{SuffCon2}
\end{align}    
\end{subequations}
Unlike the case of homogeneous HVs considered in \cite{wu2018stabilizing,10753487}, optimizing the lower bound $K^{\ast}(\m \beta)$ for $\m  \beta$ is more computationally challenging. That necessitates the derivation of a theoretical lower bound on $K^{\ast}(\m \beta)$ in the sequel. Built upon the following lemma, we next derive a lower bound on $K^{\ast}(\m \beta)$, namely $\underline{K}(\m \beta)$.
\begin{mylem} \label{Lemma1}
    If $\Delta_{\m \alpha_j} < 0$ holds for some $j \in \mathcal{I}_{\mathrm{HV}}$, then $D_{\m \alpha_j}(\omega)$ takes its global maximum value at
    \begin{align} \label{tilomeg}
        & \tilde{\omega}_j = \frac{\alpha_{j1}}{\alpha_{j3}}\sqrt{\sqrt{1-\bigg(\frac{\alpha_{j3}}{\alpha_{j1}}\bigg)^2\Delta_{\m \alpha_j}}-1}.
    \end{align}
Moreover, the global maximum value is
\begin{align} \label{GMV}
    & D_{\m \alpha_j}(\tilde{\omega}_j) = \frac{1}{2} \ln \Bigg( \frac{\alpha_{j1}^2}{\alpha_{j1}^2-{\tilde{\omega}_j}^4}\Bigg).
\end{align}
\end{mylem}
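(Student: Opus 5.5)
Since $\ln$ is increasing, we will maximize the squared magnitude directly in the variable $x := \omega^2 > 0$. Drop the index $j$ and write $a_1,a_2,a_3$ for $\alpha_{j1},\alpha_{j2},\alpha_{j3}$ and $\Delta$ for $\Delta_{\m\alpha_j}$. Set
\begin{align*}
R(x) := \frac{a_3^2 x + a_1^2}{x^2 + (a_2^2-2a_1)x + a_1^2} = \frac{a_3^2 x + a_1^2}{x^2 + (\Delta + a_3^2)x + a_1^2},
\end{align*}
using $a_2^2 - 2a_1 = \Delta + a_3^2$. Then $D_{\m\alpha_j}(\omega) = \tfrac12\ln R(\omega^2)$. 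Note that $R(0)=1$ and $R(x)\to 0$ as $x\to\infty$. By \eqref{SignDforDeltaalpha}, when $\Delta<0$ we have $R>1$ on $(0,-\Delta)$, so the supremum is attained at an interior critical point.

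\textbf{Step 1 (critical points).} Write $N = a_3^2x + a_1^2$ and $Q = x^2 + (\Delta+a_3^2)x + a_1^2$. The condition $R'(x)=0$ becomes $N'Q - NQ' = 0$, that is,
\begin{align*}
a_3^2 Q - N(2x + \Delta + a_3^2) = 0.
\end{align*}
Expanding, the linear terms in $x$ cancel partially and the condition reduces to
\begin{align*}
-a_3^2x^2 - 2a_1^2 x - a_1^2\Delta = 0, \quad\text{i.e.}\quad a_3^2 x^2 + 2a_1^2 x + a_1^2\Delta = 0.
\end{align*}
Because $\Delta<0$, the product of the roots is negative, so there is exactly one positive root:
\begin{align*}
x^\ast = \frac{-a_1^2 + \sqrt{a_1^4 - a_1^2a_3^2\Delta}}{a_3^2} = \frac{a_1^2}{a_3^2}\left(\sqrt{1-\tfrac{a_3^2}{a_1^2}\Delta}-1\right).
\end{align*}
Then $\tilde\omega_j=\sqrt{x^\ast}$ reproduces \eqref{tilomeg}. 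For uniqueness and globality, observe that the numerator of $R'$ is a concave quadratic that is positive at $x=0$ (its value there is $-a_1^2\Delta>0$). Hence $R$ increases on $(0,x^\ast)$ and decreases afterwards, so $x^\ast$ is the global maximizer on $\mathbb{R}_+$.

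\textbf{Step 2 (maximum value).} This is the step I expect to be the only real obstacle, since it is pure algebra. At a critical point, $R(x^\ast) = N'/Q' = a_3^2/(2x^\ast+\Delta+a_3^2)$. The critical equation gives $a_1^2\Delta = -a_3^2x^{\ast2} - 2a_1^2x^\ast$, so
\begin{align*}
\Delta = -\frac{a_3^2x^{\ast2}}{a_1^2} - 2x^\ast.
\end{align*}
Substituting, the denominator becomes $2x^\ast + \Delta + a_3^2 = a_3^2\bigl(1 - x^{\ast2}/a_1^2\bigr)$. Therefore
\begin{align*}
R(x^\ast) = \frac{a_1^2}{a_1^2 - x^{\ast2}} = \frac{a_1^2}{a_1^2 - \tilde\omega_j^4},
\end{align*}
and taking $\tfrac12\ln$ gives \eqref{GMV}.

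Positivity of the denominator, namely $\tilde\omega_j^2<a_1$, follows automatically from $R(x^\ast)>1>0$. One could also check it directly from the closed form of $x^\ast$.
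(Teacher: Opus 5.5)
Your proof is correct and follows essentially the same route as the paper. Both arguments differentiate, reduce the critical-point condition to the quadratic $\alpha_{j3}^2\omega^4+2\alpha_{j1}^2\omega^2+\alpha_{j1}^2\Delta_{\m\alpha_j}=0$ (you write it in $x=\omega^2$), take its unique positive root, and evaluate there. Your sign analysis of the concave numerator of $R'$, together with the identity $R(x^\ast)=N'(x^\ast)/Q'(x^\ast)$, fills in the globality argument and the algebra for \eqref{GMV}, both of which the paper leaves implicit. That identity is legitimate because $Q'(x^\ast)=0$ would force $\alpha_{j3}^2Q(x^\ast)=0$, which cannot happen.
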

\begin{IEEEproof}
    See Appendix \ref{App2}.
\end{IEEEproof}

\begin{myprs} \label{Propo2}
    If \eqref{SuffCon1} holds and $\Delta_{\m \alpha_j} < 0$ holds for some $j \in \mathcal{I}_{\mathrm{HV}}$, then we get the following lower bound on $K^{\ast}(\m \beta)$:
    \begin{align} \label{LowB}
        & K^{\ast}(\m \beta) \ge \overbrace{\frac{1}{1+ \underset{j \in \tilde{\mathcal{I}}^{+}_{\mathrm{HV}}}{\min} \frac{-D_{\m \beta}(\tilde{\omega}_j)}{D^{\mathrm{AM}}(\tilde{\omega}_j)}}}^{\underline{K}(\m \beta)},
    \end{align}
    where $\tilde{\mathcal{I}}^{+}_{\mathrm{HV}} := \{j \in \mathcal{I}_{\mathrm{HV}}: \Delta_{\m \alpha_j} < 0, D^{\mathrm{AM}}(\tilde{\omega}_j) > 0\}$ and $\tilde{\omega}_j$ can be computed via \eqref{tilomeg}.
\end{myprs}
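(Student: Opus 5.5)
We prove the bound by comparing the infimum defining $J^{\ast}(\m \beta)$ in \eqref{SuffCon2} with its values at finitely many admissible frequencies. The map $x \mapsto \frac{1}{1+x}$ is strictly decreasing on $[0,\infty)$. Hence $K^{\ast}(\m \beta) \ge \underline{K}(\m \beta)$ is equivalent to
\begin{align*}
J^{\ast}(\m \beta) \le \min_{j \in \tilde{\mathcal{I}}^{+}_{\mathrm{HV}}} \frac{-D_{\m \beta}(\tilde{\omega}_j)}{D^{\mathrm{AM}}(\tilde{\omega}_j)},
\end{align*}
provided both quantities are nonnegative, so the denominators $1+x$ are positive. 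Nonnegativity comes from \eqref{SuffCon1}. By \eqref{Deltas}, $\Delta_{\m \beta} \ge 0$ gives $D_{\m \beta}(\omega) < 0$ for all $\omega > 0$. So every ratio $-D_{\m \beta}(\omega)/D^{\mathrm{AM}}(\omega)$ with $D^{\mathrm{AM}}(\omega) > 0$ is strictly positive. It follows that $J^{\ast}(\m \beta) \ge 0$ and the right-hand side above is positive.

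The key step is that each $\tilde{\omega}_j$ with $j \in \tilde{\mathcal{I}}^{+}_{\mathrm{HV}}$ lies in the feasible set of the infimum. Since $\Delta_{\m \alpha_j} < 0$ for such $j$, Lemma \ref{Lemma1} applies. The inner square root in \eqref{tilomeg} then exceeds $1$, so $\tilde{\omega}_j$ is a well-defined positive real number. By definition of $\tilde{\mathcal{I}}^{+}_{\mathrm{HV}}$ we also have $D^{\mathrm{AM}}(\tilde{\omega}_j) > 0$. Thus $\tilde{\omega}_j \in \{\omega > 0 : D^{\mathrm{AM}}(\omega) > 0\}$, and the infimum is at most the value of its objective at $\tilde{\omega}_j$:
\begin{align*}
J^{\ast}(\m \beta) \le \frac{-D_{\m \beta}(\tilde{\omega}_j)}{D^{\mathrm{AM}}(\tilde{\omega}_j)} \quad \text{for each } j \in \tilde{\mathcal{I}}^{+}_{\mathrm{HV}}.
\end{align*}
Taking the minimum over this finite index set gives the displayed inequality, and monotonicity then yields \eqref{LowB}.

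The main obstacle is that $\tilde{\mathcal{I}}^{+}_{\mathrm{HV}}$ might be empty, in which case the minimum is undefined. Two cases need handling.
\begin{itemize}
\item If the feasible set of the infimum is empty, we read $J^{\ast}(\m \beta) = +\infty$ and $K^{\ast}(\m \beta) = 0$, and we interpret the empty minimum as $+\infty$. Then $\underline{K}(\m \beta) = 0$ and the inequality holds trivially.
\item Otherwise we only need the convention that the empty minimum is $+\infty$, which again gives $\underline{K}(\m \beta) = 0 \le K^{\ast}(\m \beta)$.
\end{itemize}
We would remark that the hypothesis $\Delta_{\m \alpha_j} < 0$ for some $j$ is what makes the candidate frequencies $\tilde{\omega}_j$ exist. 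The choice of $\tilde{\omega}_j$ is heuristically motivated, since it maximizes the single term $D_{\m \alpha_j}$ and so tends to make $D^{\mathrm{AM}}$ large there. The validity of the bound, however, relies only on $\tilde{\omega}_j$ being admissible, not on any optimality of $\tilde{\omega}_j$ for the ratio.
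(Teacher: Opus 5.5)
Your proposal is correct and follows essentially the same route as the paper's proof. Both arguments note that each $\tilde{\omega}_j$ with $j \in \tilde{\mathcal{I}}^{+}_{\mathrm{HV}}$ is admissible in the infimum defining $J^{\ast}(\m \beta)$, bound $J^{\ast}(\m \beta)$ by the minimum of the ratios, and conclude by monotonicity of $x \mapsto 1/(1+x)$. Your additional checks fill in details the paper leaves implicit: nonnegativity via $\Delta_{\m \beta} \ge 0$, well-definedness of $\tilde{\omega}_j$, and the empty-index-set convention.
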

\begin{IEEEproof}
See Appendix \ref{App3}.
\end{IEEEproof}
According to \eqref{SuffCon}, we state the following corollary regarding the conservative lower bound on the AV penetration rate $\gamma$.
\begin{mycor} \label{Cor2}
    If \eqref{SuffCon1} holds and $\Delta_{\m \alpha_j} < 0$ holds for some $j \in \mathcal{I}_{\mathrm{HV}}$, then we get the following conservative lower bound on the AV penetration rate:
    \begin{align} \label{gKss}
        & \boxed{\gamma \ge K^{\ast \ast}}
    \end{align}
    where $K^{\ast \ast} := \min \{K^{\ast}(\m \beta): \m \beta \in \mathcal{B}_1 \cap \mathcal{B}_2, \Delta_{\m \beta} \ge 0\}$.
\end{mycor}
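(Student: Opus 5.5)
The plan is a direct chain of implications built on the rearrangement \eqref{SuffCon} of the sufficient string stability criterion \eqref{logEA}. The corollary concerns the AV penetration rate $\gamma$ achievable under the sufficient criterion. So I would fix any admissible control vector $\m \beta \in \mathcal{B}_1 \cap \mathcal{B}_2$ for which the sufficient criterion certifies string stability, and show that $\gamma$ must dominate $K^{\ast\ast}$.

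\textbf{First step: feasibility of $\m\beta$.} Suppose $\m \beta \in \mathcal{B}_1 \cap \mathcal{B}_2$ satisfies \eqref{logEA}, or equivalently \eqref{SuffCon}. Then \eqref{SuffCon1} gives $\Delta_{\m\beta} \ge 0$. Hence $\m\beta$ lies in the feasible set $\{\m \beta \in \mathcal{B}_1 \cap \mathcal{B}_2: \Delta_{\m \beta} \ge 0\}$ over which $K^{\ast\ast}$ is defined. Inequality \eqref{SuffCon2} then gives $\gamma \ge K^{\ast}(\m\beta)$.

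\textbf{Second step: the infimum is over a nonempty set, so $K^\ast(\m\beta)$ is a genuine bound.} The hypothesis that $\Delta_{\m\alpha_j}<0$ for some $j$ is what guarantees this. By \eqref{SignDforDeltaalpha} and Lemma \ref{Lemma1}, there are frequencies $\omega>0$ where $D_{\m\alpha_j}(\omega)>0$. This suggests $D^{\mathrm{AM}}$ is positive somewhere, which is the regime in which Proposition \ref{Propo2} applies. On the AV side, $\Delta_{\m\beta}\ge0$ together with \eqref{Deltas} gives $D_{\m\beta}(\omega)<0$ for all $\omega>0$. So each ratio $-D_{\m\beta}/D^{\mathrm{AM}}$ in $J^\ast(\m\beta)$ is positive, hence $J^\ast(\m\beta)\in[0,\infty)$ and $K^\ast(\m\beta)\in(0,1]$ is well defined. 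If instead $D^{\mathrm{AM}}\le 0$ everywhere, the set is empty, $J^\ast=+\infty$ and $K^\ast=0$. The bound is then trivially true, so I would note this and not treat it separately.

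\textbf{Third step: pass to the minimum over $\m\beta$.} From the first step, $\gamma \ge K^\ast(\m\beta)$ holds for the particular feasible $\m\beta$ used. Since $K^{\ast\ast}$ is the minimum of $K^\ast$ over all feasible $\m\beta$, we have $K^\ast(\m\beta)\ge K^{\ast\ast}$. Chaining the two gives \eqref{gKss}.

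\textbf{Main obstacle: existence of the minimum in $K^{\ast\ast}$.} The feasible set is compact except for the $\epsilon$-shifted strict inequalities in \eqref{RDCs}. Handled via Proposition \ref{Propo1}, these become closed constraints $\psi_i\in[0,1]$. Continuity of $K^\ast$ in $\m\beta$ is the delicate point, because it is an infimum over $\omega$. If continuity is hard to establish, I would read $\min$ as $\inf$. The inequality $\gamma\ge K^\ast(\m\beta)\ge \inf K^\ast$ still holds, so the argument goes through unchanged.
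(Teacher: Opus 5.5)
Your proposal is correct and follows the paper, which states the corollary as an immediate consequence of \eqref{SuffCon}: for any admissible $\m\beta$ with $\Delta_{\m\beta}\ge 0$ meeting the sufficient criterion, $\gamma \ge K^{\ast}(\m\beta) \ge K^{\ast\ast}$. Your ``main obstacle'' can be settled without the $\epsilon$ of Proposition~\ref{Propo1}. Because $\m\beta^l>0$ and $\Delta_{\m\beta}\ge0$ gives $\beta_2^2\ge\beta_3^2+2\beta_1>\beta_3^2$, the RDCs hold automatically, so the feasible set is the compact set $\{\m\beta\in[\m\beta^l,\m\beta^u]:\Delta_{\m\beta}\ge 0\}$. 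On it, $J^{\ast}$ is an infimum of functions continuous in $\m\beta$ over the $\m\beta$-independent set $\{\omega>0: D^{\mathrm{AM}}(\omega)>0\}$, hence upper semicontinuous. Therefore $K^{\ast}$ is lower semicontinuous and the minimum is attained.
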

\subsubsection{More conservative lower bound on the AV penetration rate} Although the search space associated with $K^{\ast \ast}$ in \eqref{gKss} can identically be cast via \cite[Corollary 1]{10753487}, the computational difficulty associated with $K^{\ast \ast}$ obstructs the exact computation of $K^{\ast}(\m \beta)$. But, we can find an upper bound on $K^{\ast \ast}$, serving as a more conservative yet less computationally challenging lower bound on the AV penetration rate $\gamma$ via the following proposition.
\begin{myprs} \label{Propo3}
    If \eqref{SuffCon1} holds and $\Delta_{\m \alpha_j} < 0$ holds for some $j \in \mathcal{I}_{\mathrm{HV}}$, then we get $K^{\ast}(\hat{\m \beta}) \ge K^{\ast \ast}$ where
    \begin{align} \label{hatB}
        & \hat{\m \beta} := \arg \min \{\underline{K}(\m \beta): \m \beta \in \mathcal{B}_1 \cap \mathcal{B}_2, \Delta_{\m \beta} \ge 0\}
    \end{align}    
    leading to the following more conservative, compared to $K^{\ast \ast}$ in \eqref{gKss}, lower bound on the AV penetration rate:
    \begin{align} \label{UpbKst}
        & \boxed{\gamma \ge K^{\ast}(\hat{\m \beta}),}
    \end{align}
\end{myprs}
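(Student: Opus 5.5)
The claim is essentially that evaluating the true objective $K^{\ast}$ at a particular feasible point cannot fall below its minimum over the feasible set. Define $\mathcal{F} := \{\m \beta \in \mathcal{B}_1 \cap \mathcal{B}_2 : \Delta_{\m \beta} \ge 0\}$. By construction in \eqref{hatB}, $\hat{\m \beta}$ is a minimizer of $\underline{K}(\cdot)$ over $\mathcal{F}$, so in particular $\hat{\m \beta} \in \mathcal{F}$. Since $K^{\ast\ast} = \min\{K^{\ast}(\m \beta): \m \beta \in \mathcal{F}\}$ by Corollary \ref{Cor2}, any feasible point gives $K^{\ast}(\hat{\m \beta}) \ge K^{\ast\ast}$. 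I would state this first, noting that the hypotheses (\eqref{SuffCon1} and $\Delta_{\m\alpha_j}<0$ for some $j$) are exactly those under which $K^{\ast}$, $\underline{K}$, and Corollary \ref{Cor2} are meaningful. Here $\Delta_{\m\alpha_j}<0$ guarantees that $D^{\mathrm{AM}}$ can be positive somewhere, so the infimum defining $J^{\ast}(\m\beta)$ is over a nonempty set; at worst I would adopt the convention $\inf\emptyset=+\infty$, giving $K^{\ast}=0$.

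For the boxed inequality \eqref{UpbKst}, chain the bounds. For a given AV penetration rate $\gamma$ to be certified string stable via the sufficient criterion \eqref{SuffCon}, there must exist $\m\beta\in\mathcal{F}$ with $\gamma \ge K^{\ast}(\m\beta) \ge K^{\ast\ast}$. The quantity $K^{\ast}(\hat{\m\beta})$ is an upper bound on $K^{\ast\ast}$. Thus requiring $\gamma \ge K^{\ast}(\hat{\m\beta})$ implies $\gamma \ge K^{\ast\ast}$, and $\hat{\m\beta}$ itself is a concrete certificate: any $\gamma\ge K^{\ast}(\hat{\m\beta})$ satisfies \eqref{SuffCon} with $\m\beta=\hat{\m\beta}$. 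In this sense \eqref{UpbKst} is a (more conservative) threshold on $\gamma$, with the ordering $K^{\ast}(\hat{\m\beta}) \ge K^{\ast\ast} \ge \min_{\mathcal{F}} \underline{K}$ by Proposition \ref{Propo2}. I would record this full sandwich to explain why $\hat{\m\beta}$ is a sensible, computable surrogate.

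The main obstacle is well-posedness rather than the inequality itself. The $\arg\min$ in \eqref{hatB} and the $\min$ in $K^{\ast\ast}$ need to be attained. I would argue that $\mathcal{B}_1\cap\mathcal{B}_2$, restricted through the $\epsilon$-margins of Proposition \ref{Propo1}, is compact, and that $\{\Delta_{\m\beta}\ge 0\}$ is closed. Continuity of $\underline{K}$ follows because it is a finite min of continuous functions of $\m\beta$: the $\tilde\omega_j$ do not depend on $\m\beta$. Hence a minimizer $\hat{\m\beta}$ exists. If attainment of $K^{\ast\ast}$ is doubtful, I would replace $\min$ by $\inf$, and the argument $K^{\ast}(\hat{\m\beta})\ge\inf_{\mathcal{F}}K^{\ast}$ goes through unchanged. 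Ties in the $\arg\min$ are harmless, since any selection lies in $\mathcal{F}$.
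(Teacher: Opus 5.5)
Your proof is correct and is the same as the paper's: $\hat{\m\beta}$ lies in the feasible set $\{\m\beta\in\mathcal{B}_1\cap\mathcal{B}_2:\Delta_{\m\beta}\ge 0\}$, and $K^{\ast\ast}$ is the minimum of $K^{\ast}$ over that set, so $K^{\ast}(\hat{\m\beta})\ge K^{\ast\ast}$; your well-posedness discussion goes beyond what the paper provides. Two small asides: $\Delta_{\m\alpha_j}<0$ for a single $j$ does not by itself force $D^{\mathrm{AM}}>0$ anywhere, since other HVs can pull the average negative, so your $\inf\emptyset=+\infty$ fallback is genuinely needed; and compactness of the feasible set needs no $\epsilon$-margin, because $\Delta_{\m\beta}\ge 0$ and $\beta_1\ge\beta_1^l>0$ already force $\beta_2-\beta_3\ge 2\beta_1^l/(\beta_2^u+\beta_3^u)$.
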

\begin{IEEEproof}
See Appendix \ref{App4}.
\end{IEEEproof}
Via nonlinear optimization techniques (by searching over the parameterized $\mathcal{H}_{\infty}$-based stabilizing control parameters), $K^{\ast}(\hat{\m \beta})$ in \eqref{UpbKst} can be computed.

\subsubsection{Less conservative lower bound on the AV penetration rate} \label{C3} Built upon the parameterized control parameter vector $\m \beta(\m \theta)$ characterized by \eqref{betap} in Corollary \ref{Cor1}, we construct the following feasibility problem:
    \begin{align}\label{Feas}
    \underset{\m \theta \in \mathbb{R}^3} {\mathrm{find}} \;\; \m \theta \;\; 
    \mathrm{subject~to} \;\;\rho[\m M(\m \beta(\m \theta))] \le 0.
\end{align}
To find a feasible solution $\m \theta$ for \eqref{Feas}, we can solve the following unconstrained optimization problem:
\begin{align} \label{UncOpt}
    & \underset{\m \theta \in \mathbb{R}^3}{\mathrm{minimize}}\;~\rho[\m M(\m \beta(\m \theta))],
\end{align}
for $\m \theta^{\ast}$ and if $\rho[\m M(\m \beta(\m \theta^{\ast}))] \le 0$ holds, then $\m \theta^{\ast}$ will be a feasible solution for the feasibility problem \eqref{Feas}. One can solve the unconstrained optimization problem \eqref{UncOpt} for $\m \theta^{\ast}$ via any nonlinear optimization solver. The optimal control parameter vector $\m \beta^{\ast}$ in Problem \ref{Prob1} is $\m \beta(\m \theta^{\ast})$.

Given a fixed number of heterogeneous HVs, namely $N_{\mathrm{HV}}$, according to \eqref{UpbKst} we equivalently get the following more conservative, compared to $\Big \lceil \frac{K^{\ast \ast}}{1-K^{\ast \ast}} N_{\mathrm{HV}} \Big \rceil$, lower bound on the number of required identical AVs to stabilize traffic flow:
\begin{align} \label{LNAV}
    & N_{\mathrm{AV}} \ge \bigg \lceil \frac{K^{\ast}(\hat{\m \beta})}{1-K^{\ast}(\hat{\m \beta})} N_{\mathrm{HV}} \bigg \rceil.
\end{align}
Starting from $m^{(0)} = \Big \lceil \frac{K^{\ast}(\hat{\m \beta})}{1-K^{\ast}(\hat{\m \beta})} N_{\mathrm{HV}} \Big \rceil$ and iteratively solving \eqref{UncOpt} for each $m$ along with a bisection method, we can verify if \eqref{Feas} admits a feasible solution $\m \theta^{\ast}$ or not. Via such a bisection-based iterative routine, we aim to compute the smallest $m$, namely $m^{\ast}$, for which \eqref{Feas} admits a feasible solution $\m \theta^{\ast}$. We then get the following less conservative lower bound on the AV penetration rate:
\begin{align} \label{gamstar}
    & \boxed{\gamma \ge \overbrace{\frac{m^{\ast}}{m^{\ast}+ N_{\mathrm{HV}}}}^{\gamma^{\ast}}.}
\end{align}
For the lower bounds on the AV penetration rate $\gamma$ derived by \eqref{gKss}, \eqref{UpbKst}, and \eqref{gamstar}, the following inequalities hold:
\begin{align} \label{MRes}
    & \boxed{\overbrace{K^{\ast}(\hat{\m \beta})}^{\text{More~Conservative}} \ge \overbrace{K^{\ast \ast}}^{\text{Conservative}} \ge \overbrace{\gamma^{\ast}}^{\text{Less~Conservative}}.}
\end{align}
Remarkably, for the case of traffic flow with heterogeneous HVs, we derive two additional lower bounds on the AV penetration rate $\gamma$ in \eqref{MRes}: $\gamma^{\ast}$ (less conservative) and $K^{\ast}(\hat{\m \beta})$ (more conservative), in addition to the conservative lower bound on the AV penetration rate $\gamma$, i.e., $K^{\ast \ast}$ similarly proposed by \cite{wu2018stabilizing,10753487} for the case of traffic flow with homogeneous HVs.

\section{Numerical Simulations} \label{Prcre}
This section presents Algorithm \ref{alg:LCLbf} to compute the less conservative lower bound on the AV penetration rate $\gamma$. Then, we assess the effectiveness of the theoretical results by conducting numerical simulations in MATLAB R2024a. In Algorithm \ref{alg:LCLbf}, since the objective function \eqref{UncOpt} involves the spectral abscissa, which is a non-convex and non-smooth function, we accordingly employ the MATLAB built-in nonlinear optimization solver $\texttt{fminsearch}()$ (developed based on the Nelder–Mead simplex method \cite{lagarias1998convergence}) to solve \eqref{UncOpt} for $\m \theta^{\ast}$ using the initial point $\m \theta_0$ as a randomly generated initial $\m \theta$. It is noteworthy that, given a non-convex and non-smooth optimization problem, attaining a globally optimal solution is not generally guaranteed. Since the search space associated with the optimization variable $\m \theta$ is three-dimensional, the optimization solver is quite efficient in terms of computational time. To test the theoretical results, built upon the utilized numerical setup associated with the OV-modeled HVs dynamics in \cite{zheng2020smoothing}, we consider two main scenarios: \textit{(i)} homogeneous HVs and \textit{(ii)} heterogeneous HVs. We choose the lower and upper bounds on the control parameters as $\m \beta^{l} = 0.8 \times \mathbf{1}_3$ and $\m \beta^{u} = 2 \times \mathbf{1}_3$. For our considered problem setup, we highlight that the \textit{stability invariance} property holds, i.e., the eigenmode spectrum of $\m M(\m \beta (\m \theta))$ is independent of the AV formation as the eigenmodes are the $2n$ roots of \eqref{PCR} which is agnostic to the AV formation. However, since \cite{li2022cooperative} suggests that an AV formation with a uniform distribution can potentially achieve a better $\mathcal{H}_2$ performance, we choose to use a uniform distribution for the choice of AV formation in our numerical simulations.

\setlength{\floatsep}{2pt}{
\begin{algorithm}[!ht]
\caption{\textbf{Less conservative lower bound on the AV penetration rate $\gamma$}}\label{alg:LCLbf}
\KwData{$N_{\mathrm{HV}}$, $\m \alpha_j$s for $j \in \mathcal{I}_{\mathrm{HV}}$, $\m \beta^{l}$, $\m \beta^{u}$}

\tcc{String stability verification}

$m \gets 1$

Construct $\rho[\m M(\m \beta(\m \theta))]$ via \eqref{betap}

Initialize $\m \theta$ with $\m \theta_0$

Solve \eqref{UncOpt} for $\m \theta^{\ast}$ using the initial point $\m \theta_0$

\eIf{$\rho[\m M(\m \beta(\m \theta^{\ast}))] \le 0$}
{
    $m^{\ast} \gets 1$
}{\tcc{Bisection method initialization}
Construct $\underline{K}(\m \beta)$ via \eqref{LowB} 

Compute $\hat{\m \beta}$ via \eqref{hatB}

Compute $K^{\ast}(\hat{\m \beta})$ via \eqref{SuffCon2}

$m^{(0)} \gets \Big \lceil \frac{K^{\ast}(\hat{\m \beta})}{1-K^{\ast}(\hat{\m \beta})} N_{\mathrm{HV}} \Big \rceil$ 

$a_0 \gets 1$, $b_0 \gets m^{(0)}$

\tcc{Bisection method}
\While{$b_i - a_i > 1$}{\tcc{String stability verification} $m^{(i)} \gets \big \lceil \frac{a_i+b_i}{2} \big \rceil$

Construct $\rho[\m M(\m \beta(\m \theta))]$ via \eqref{betap}

Solve \eqref{UncOpt} for $\m \theta^{\ast}$ using the initial point $\m \theta_0$

\eIf{$\rho[\m M(\m \beta(\m \theta^{\ast}))] \le 0$}{$a_{i+1} \gets a_i$, $b_{i+1} \gets m^{(i)}$}{$a_{i+1} \gets m^{(i)}$, $b_{i+1} \gets b_i$}

$m^{\ast} \gets b_{i+1}$}

}

Compute $\gamma^{\ast}$ via \eqref{gamstar}

\KwResult{$\gamma^{\ast}$}
\end{algorithm}}

\subsection{Homogeneous scenario}
We consider the following settings for all $j \in \mathcal{I}_{\mathrm{HV}}$: $\m \alpha_j = \begin{bmatrix} 0.3\pi & 1.5 & 0.9 \end{bmatrix}^\top,~\Delta_{\m \alpha_j} = -0.4450 < 0,$
with $\rho = 0.0256 > 0$, i.e., the uncontrolled platoon is string unstable. For this setup, applying the procedure proposed by \cite{10753487}, one can get $K^{\ast \ast} = 0.1541$. Running Algorithm \ref{alg:LCLbf} along with $N_{\mathrm{HV}} = 24$, we get $\m \beta(\m \theta^{\ast}) = \begin{bmatrix}
    0.8018 & 2.0000 & 0.8001
\end{bmatrix}^\top$ satisfying $\rho[\m M(\m \beta(\m \theta^{\ast}))] = -1.7183 \times 10^{-15} \le 0$, i.e., the controlled platoon is string stable, $m^{\ast} = 4$, and $\gamma^{\ast} = 0.1429$. Moreover, we get $\hat{\m \beta} = \begin{bmatrix}
    0.8000 & 2.0000 & 0.8000
\end{bmatrix}$ and $K^{\ast}(\hat{\m \beta}) = 0.1541$. We observe that $K^{\ast}(\hat{\m \beta}) = K^{\ast \ast} > \gamma^{\ast}$ holds. Since
\begin{align*}
    \overbrace{m^{\ast}}^{4} < \overbrace{\bigg \lceil \frac{K^{\ast \ast}}{1-K^{\ast \ast}} N_{\mathrm{HV}} \bigg \rceil}^{5} = \overbrace{\bigg \lceil \frac{K^{\ast}(\hat{\m \beta})}{1-K^{\ast}(\hat{\m \beta})} N_{\mathrm{HV}} \bigg \rceil}^{5}
\end{align*}
holds, the proposed direct method (less conservative) outperforms the $\mathcal{H}_{\infty}$-based alternative (conservative) proposed by \cite{10753487} in terms of conservativeness. Quantitatively, it attains $100 \times \frac{\frac{5}{5+24}-\frac{4}{4+24}}{\frac{5}{5+24}} = 17.14\%$ AV penetration rate reduction (improvement) while ensuring the string stability of traffic flow.

\subsection{Heterogeneous scenario}
We consider the following settings for all $j \in \mathcal{I}_{\mathrm{HV}}$: $\m \alpha_j = \begin{bmatrix} 0.3\pi & 1.5 & 0.9 \end{bmatrix}^\top \odot (\mathbf{1}_3 + \m \nu_j \odot \m \kappa),$
where $\m \nu_j$s are $3$-dimensional randomly generated vectors via $2*\texttt{rand}(3,1)-\texttt{ones}(3,1)$ and $\m \kappa \in \mathbb{R}_{+}^3$ denotes the deviation parameter. In other terms, we implement the heterogeneity of HVs via randomly generated $\m \nu_j$s. Also, we have $\rho = 0.0347 > 0$, i.e., the uncontrolled platoon is string unstable. Running Algorithm \ref{alg:LCLbf} along with $N_{\mathrm{HV}} = 24$ and $\m \kappa = 0.15 \times \mathbf{1}_3$, we get $\m \beta(\m \theta^{\ast}) = \begin{bmatrix}
    0.8000 & 2.0000 & 0.8009
\end{bmatrix}^\top$ satisfying $\rho[\m M(\m \beta(\m \theta^{\ast}))] = -1.7867 \times 10^{-15} \le 0$, i.e., the controlled platoon is string stable, $m^{\ast} = 5$, and $\gamma^{\ast} = 0.1724$. Moreover, we get $\hat{\m \beta} = \begin{bmatrix}
    0.8000 & 2.0000 & 0.8000
\end{bmatrix}$ and $K^{\ast}(\hat{\m \beta}) = 0.1813$. We observe that $K^{\ast}(\hat{\m \beta}) > \gamma^{\ast}$ holds. Since
\begin{align*}
    \overbrace{m^{\ast}}^{5} < \overbrace{\bigg \lceil \frac{K^{\ast}(\hat{\m \beta})}{1-K^{\ast}(\hat{\m \beta})} N_{\mathrm{HV}} \bigg \rceil}^{6}
\end{align*}
holds, the proposed direct method (less conservative) outperforms the proposed $\mathcal{H}_{\infty}$-based alternative (more conservative) in terms of conservativeness. As previously mentioned, computing $K^{\ast \ast}$ is computationally challenging. However, according to \eqref{MRes}, we realize that $K^{\ast \ast} \in [0.1724,0.1813]$ holds.

To visualize the effects of two string stability criteria (direct and $\mathcal{H}_{\infty}$-based), we run Algorithm \ref{alg:LCLbf} along with $N_{\mathrm{HV}} = 12$ and $\m \kappa = 0.05 \times \mathbf{1}_3$. Also, we have $\rho = 0.0038 > 0$, i.e., the uncontrolled platoon is string unstable. Then, we get $\m \beta(\m \theta^{\ast}) = \begin{bmatrix}
    1.0329 & 1.9860 & 0.8258
\end{bmatrix}^\top$, satisfying $\rho[\m M(\m \beta(\m \theta^{\ast}))] = -1.2210 \times 10^{-15} \le 0$, $m^{\ast} = 1$, and $\gamma^{\ast} = 0.0769$, i.e., the platoon controlled by a single identical AV is string stable. Moreover, we get $\hat{\m \beta} = \begin{bmatrix}
    0.8000 & 2.0000 & 0.8000
\end{bmatrix}$ and $K^{\ast}(\hat{\m \beta}) = 0.1528$. We observe that $K^{\ast}(\hat{\m \beta}) > \gamma^{\ast}$ holds. We also get
\begin{align*}
    \overbrace{m^{\ast}}^{1} < \overbrace{\bigg \lceil \frac{K^{\ast}(\hat{\m \beta})}{1-K^{\ast}(\hat{\m \beta})} N_{\mathrm{HV}} \bigg \rceil}^{3}.
\end{align*}
Similarly, according to \eqref{MRes}, we realize that $K^{\ast \ast} \in [0.0769,0.1528]$ holds.

For $m = \Big \lceil \frac{K^{\ast}(\hat{\m \beta})}{1-K^{\ast}(\hat{\m \beta})} N_{\mathrm{HV}} \Big \rceil = 3$, we get $\m \beta(\m \theta^{\ast}) = \begin{bmatrix}
    0.8070 & 1.9621 & 1.0171
\end{bmatrix}^\top$, satisfying $\rho[\m M(\m \beta(\m \theta^{\ast}))] = -1.3264 \times 10^{-15} \le 0$, i.e., the controlled platoon is string stable. Fig. \ref{Fig1} depicts the corresponding eigenmode visualizations for the less conservative and more conservative scenarios. Running the simulations with an initial perturbation of magnitude $1$ at the $n$-th vehicle's location (heterogeneous HV), Fig. \ref{Fig1} depicts the location deviation trajectories of the $(n-1)$-th and the $2$-nd vehicles (heterogeneous HVs) and the $1$-st vehicle (identical AV) for the less and more conservative scenarios. Comparing the two plots associated with $j= n-1$ (HV) on the left with the two plots associated with $j=2$ on the right, we observe that AV $j=1$ has effectively dampened the amplified perturbation through the platoon (originating at the $n$-th vehicle propagating from vehicle $j = n-1$ through vehicles $j= n-2,\dots,3$ to $j=2$). 

\begin{figure*}[t]
    \centering
    \includegraphics[scale=0.27]{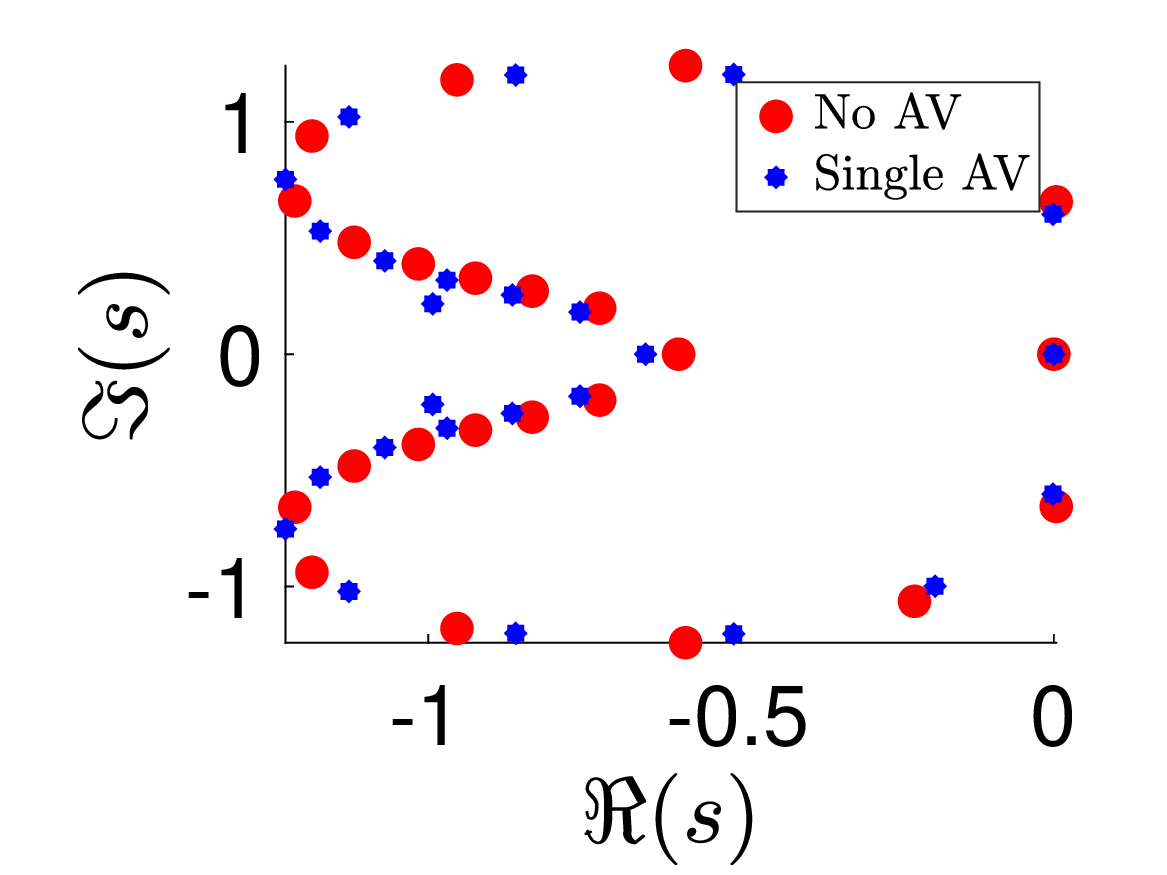}
    \includegraphics[scale=0.27]{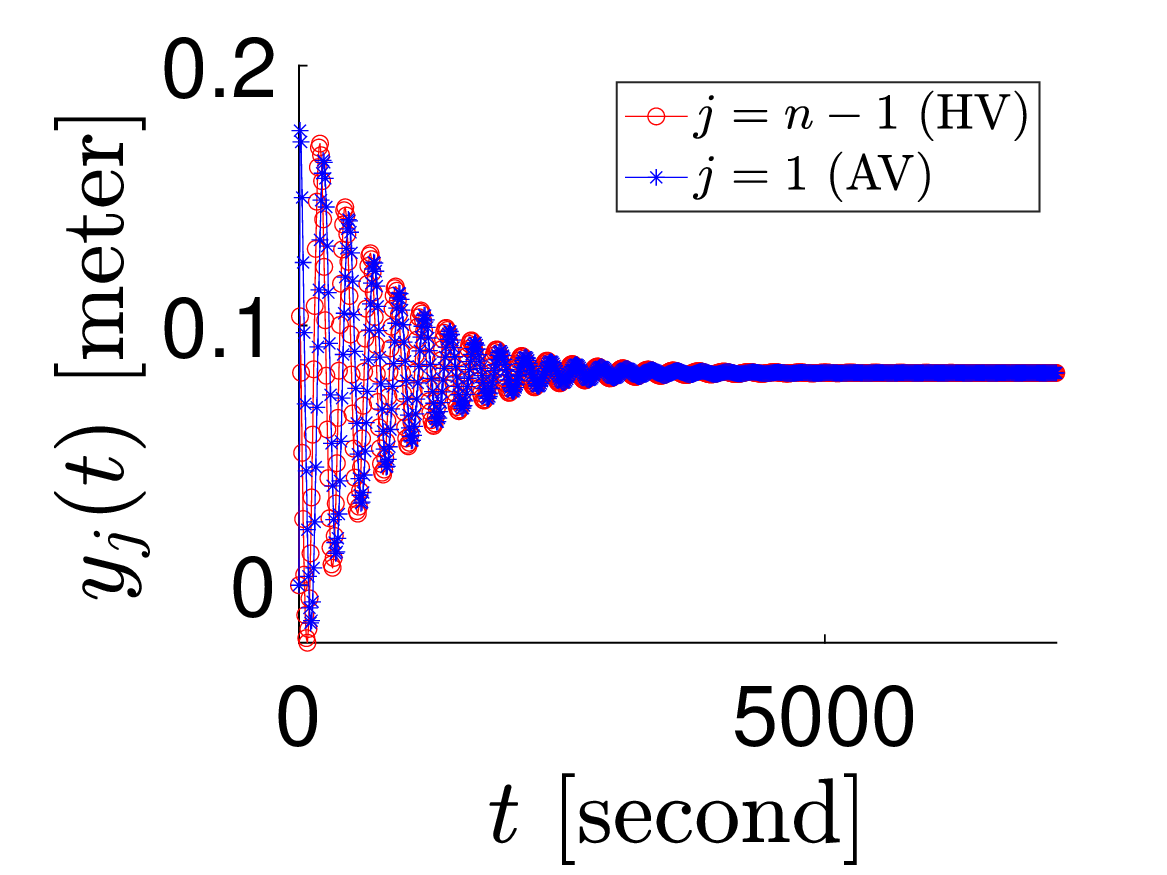}
    \includegraphics[scale=0.27]{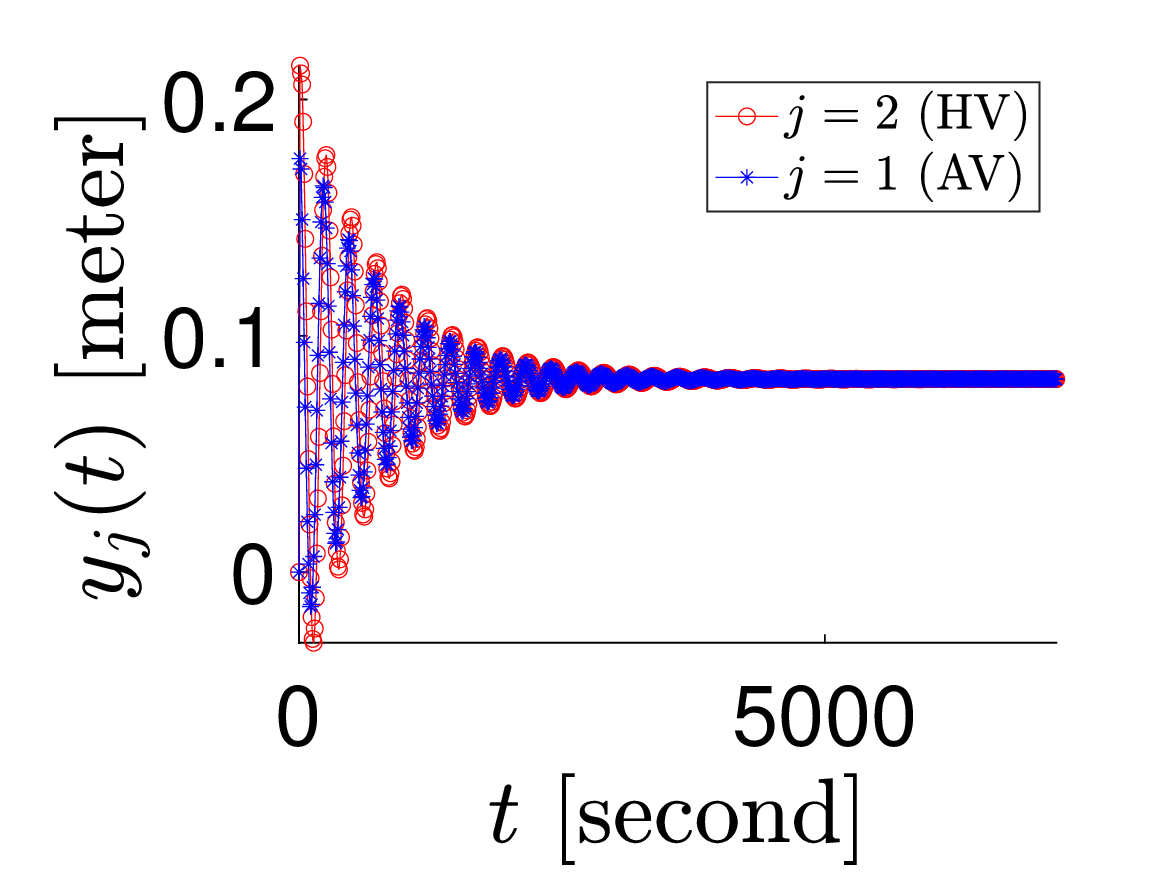}
    
    \includegraphics[scale=0.27]{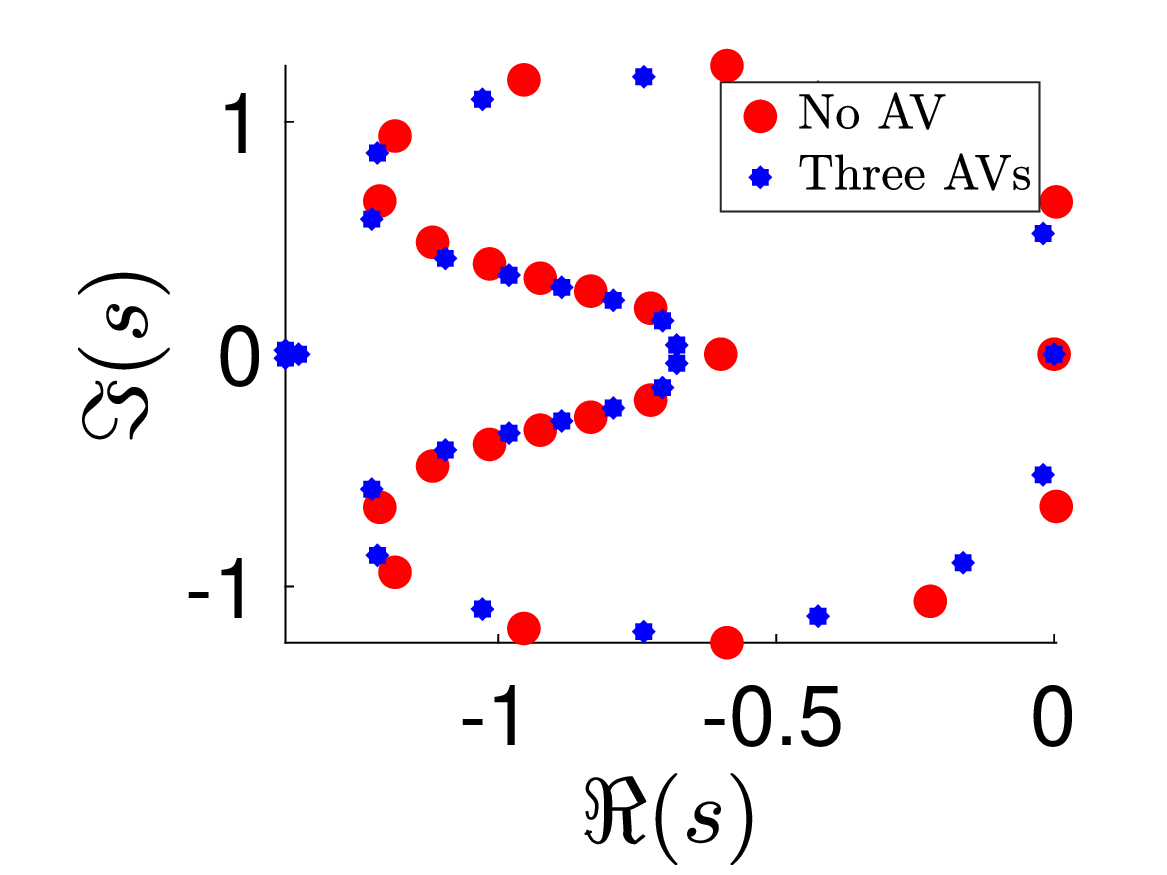}
    \includegraphics[scale=0.27]{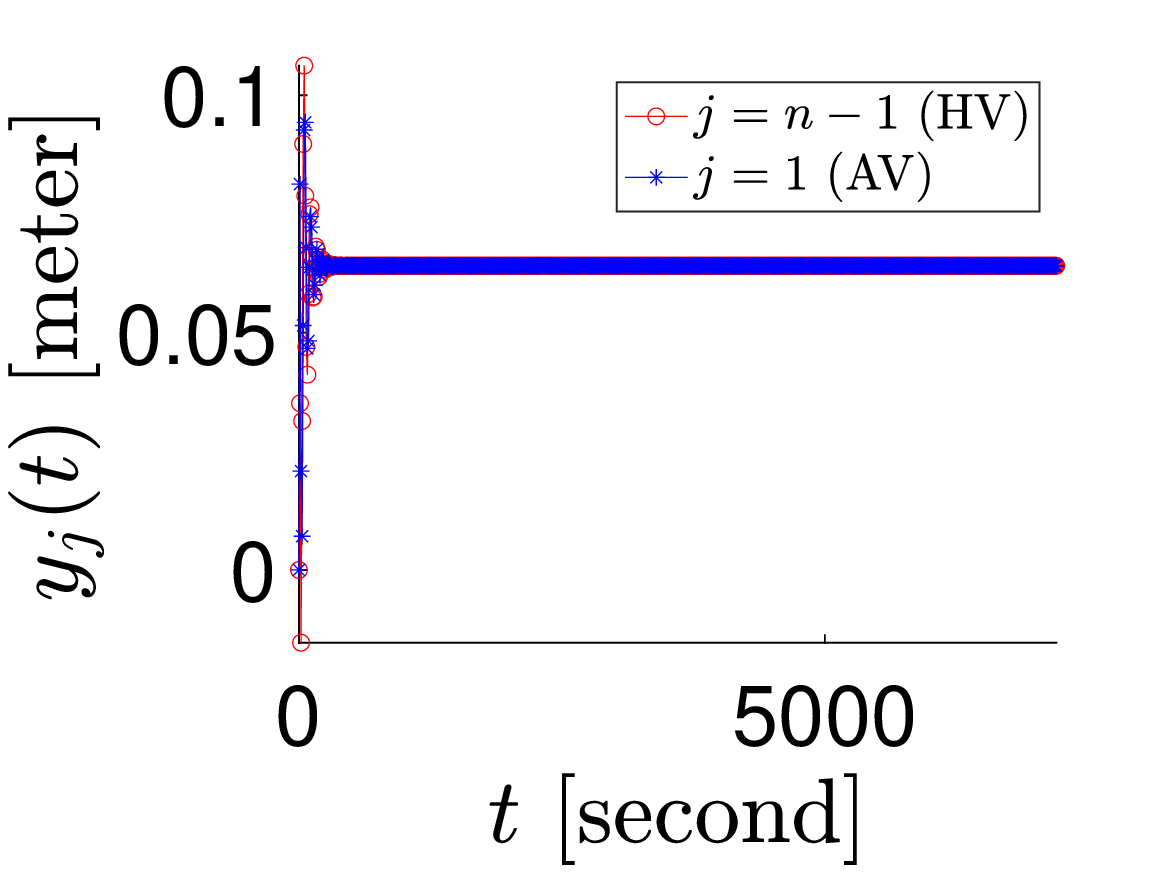}
    \includegraphics[scale=0.27]{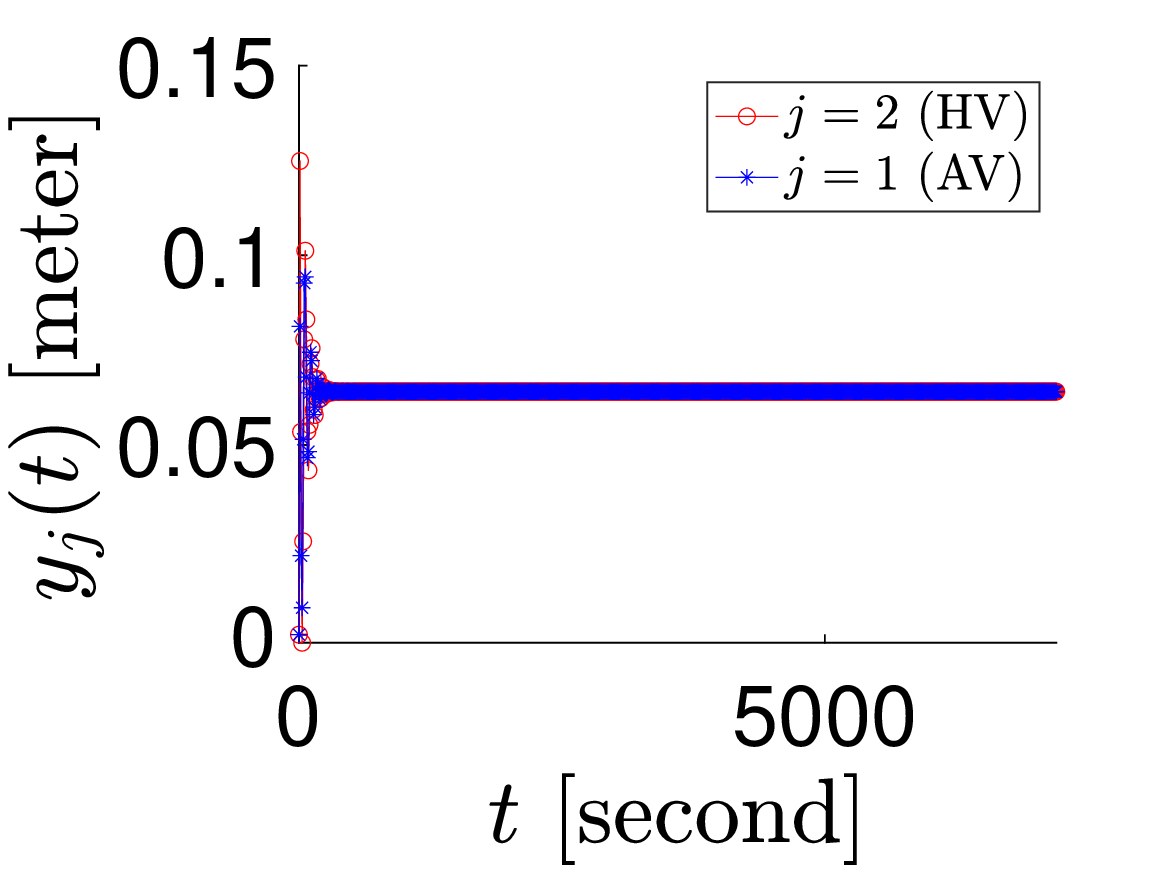}
    
    \caption{Eigenmode visualizations and location deviation trajectories of vehicles $j=n-1$ and $j=2$ (heterogeneous HVs) and $j=1$ (AV). The top row shows $m=m^{\ast}=1$, and the bottom row shows $m=\lceil K^{\ast}(\hat{\m\beta})N_{\mathrm{HV}}/(1-K^{\ast}(\hat{\m\beta}))\rceil=3$. Here, $m$ is the number of identical AVs, $m^{\ast}$ is the smallest $m$ for which \eqref{Feas} admits a feasible solution $\m\theta^{\ast}$, and $N_{\mathrm{HV}}=12$ is the fixed number of heterogeneous HVs. The quantities $K^{\ast}(\m\beta)$ and $\hat{\m\beta}$ are defined in \eqref{SuffCon2} and \eqref{hatB}, respectively.}
    \label{Fig1}
\end{figure*}

As Fig. \ref{Fig1} demonstrates, both less and more conservative scenarios successfully stabilize the traffic flow while attaining acceptable location deviation performance. However, we notice that for the more conservative scenario $m = \Big \lceil \frac{K^{\ast}(\hat{\m \beta})}{1-K^{\ast}(\hat{\m \beta})} N_{\mathrm{HV}} \Big \rceil = 3$, the closed-loop eigenmodes become more distant from the imaginary axis. Moreover, the more conservative scenario $m = \Big \lceil \frac{K^{\ast}(\hat{\m \beta})}{1-K^{\ast}(\hat{\m \beta})} N_{\mathrm{HV}} \Big \rceil = 3$ attains relatively less (better) location deviation. However, these are all achieved at the expense of the utilization of $2$ more identical AVs. In other terms, a trade-off exists between the stabilization/performance degradation and the number of utilized identical AVs. Quantitatively, the data associated with Fig. \ref{Fig1} corroborate that the AV penetration rate can be reduced by $100 \times \frac{\frac{3}{3+12}-\frac{1}{1+12}}{\frac{3}{3+12}} = 61.54\%$ at the expense of $100 \times \frac{0.0817-0.0640}{0.0640} = 27.66\%$ higher position difference deviation from the equilibrium while ensuring the string stability of traffic flow. The information extracted from such a trade-off can benefit traffic control engineers/operators in traffic control decision-making tasks.

To quantify the \textit{distance to instability} associated with the state matrix $\m M(\m \beta(\m \theta^{\ast}))$ of the aggregated linearized dynamics (after the dimension reduction through eliminating the zero eigenmode as discussed later on), we utilize the notion of \textit{minimum destabilizing real perturbation (MDRP)} \cite{van1984near}. The MDRP is also known as the real stability radius (RSR) \cite{hinrichsen1986stability}. The RSR is a metric to measure the stability robustness under the perturbation/uncertainty. The RSR for a given stable matrix $\m W \in \mathbb{R}^{n_W \times n_W}$ (i.e., $\rho[\m W + \m X] < 0$) is mathematically defined as follows:
\begin{align*}
    & \mathrm{RSR}(\m W) := \min \{\|\m X\|_F: \rho[\m W + \m X] = 0, \m X \in \mathbb{R}^{n_W \times n_W}\}. 
\end{align*}
The exact computation of the RSR is computationally intractable (NP-hard), and no practical estimation
technique could be devised due to the difficulty of the ensuing constrained minimization problem cast by \cite{van1984near}. Thus, to approximately compute the RSR, we utilize a hybrid expansion-contraction (HEC)-based method proposed by \cite{guglielmi2017approximating} that has been implemented in the robust stability package, namely ROSTAPACK, developed by \cite{Mitchell2022RObust}. Precisely, the corresponding command is called \texttt{getStabRadBound}.

According to the structure of the $\m \beta$-dependent matrices $\m A(\m \beta) \in \mathbb{R}^{n \times n}$ and $\m B(\m \beta) \in \mathbb{R}^{n \times n}$ defined by \eqref{LD}, it can be verified that $\m A(\m \beta) \mathbf{1}_n = \mathbf{0}_n$ holds and as a result, $\m M(\m \beta)$ satisfies the following eigenmode equation:
\begin{align*}
    & \m M(\m \beta) \begin{bmatrix}
        \mathbf{1}_n \\ \mathbf{0}_n
    \end{bmatrix} = 0 \times \begin{bmatrix}
        \mathbf{1}_n \\ \mathbf{0}_n
    \end{bmatrix},
\end{align*} indicating that $\m M(\m \beta)$ has an eigenvector $\begin{bmatrix}
        \mathbf{1}_n \\ \mathbf{0}_n
    \end{bmatrix}$ associated with a $0$ eigenmode. To eliminate the $0$ eigenmode from $\m M(\m \beta)$ and obtain a reduced matrix, we employ a linear algebraic---similarity transformation-based approach---similar to the one employed by \cite{bahavarnia2025quick} as follows:
    \begin{align*}
        & \m M_{\mathrm{reduced}}(\m \beta) = \m T^\top \m M(\m \beta) \m T,\\
        & \m T = \begin{bmatrix}
            \m U & \m O_{n \times n}\\
            \m O_{n \times (n-1)} & \m I_n
        \end{bmatrix},~\mathrm{Similarity~Transformation},\\
        & \m U :~\mathrm{A~Matrix~consisting~of~the~Concatenated}\\
        &\mathrm{Eigenvectors~associated~with~the~Nonzero~Eigenmodes~of}\\&\m I_n - \frac{1}{n} \mathbf{1}_n \mathbf{1}_n^\top. 
    \end{align*} Now, we are ready to compute the RSR of the stable reduced matrix $\m M_{\mathrm{reduced}}(\m \beta(\m \theta^{\ast}))$ via ROSTAPACK \cite{Mitchell2022RObust}. Quantitatively, the data associated with Fig. \ref{Fig1} corroborate that the aforementioned AV penetration rate reduction (improvement) of $61.54\%$ can be achieved at the cost of $100 \times \frac{0.0186-0.0014}{0.0186} = 92.47\%$ degradation in the RSR of the corresponding stable reduced matrix $\m M_{\mathrm{reduced}}(\m \beta(\m \theta^{\ast}))$. Remarkably, the RSR is a lower bound on the \textit{structured} RSR, which is defined subject to the following reduced, structured perturbation/uncertainty:
\begin{align*}
    & X_{\mathrm{reduced}}^{\mathrm{structured}} = \m T^\top \begin{bmatrix}
        \m O_{n \times n} & \m O_{n \times n}\\
        \m \delta \m A & \m \delta \m B
    \end{bmatrix} \m T,\\
    & \m \delta \m A :
       {\delta a}_{jj} = -{\delta \alpha}_{j1},~{\delta a}_{j(j+1)} = {\delta \alpha}_{j1} \quad j \in \mathcal{I}_{\mathrm{HV}},\\
    & \m \delta \m B:
       {\delta b}_{jj} = -{\delta \alpha}_{j2},~{\delta b}_{j(j+1)} = {\delta \alpha}_{j3} \quad j \in \mathcal{I}_{\mathrm{HV}},
\end{align*}where the perturbation/uncertainty vectors ${\m \delta \m \alpha}_j = \begin{bmatrix} {\delta \alpha}_{j1} & {\delta \alpha}_{j2} & {\delta \alpha}_{j3}
\end{bmatrix}^\top$ for $j \in \mathcal{I}_{HV}$ could model the norm-bounded linearization error, computational inaccuracies, or any other source of variation. Since the computation of the structured RSR is computationally cumbersome, we chose the RSR over the structured RSR for the purpose of analyzing stability robustness degradation.

\section{Concluding Remarks} \label{Con}
In this paper, we show that one can attain a less conservative string stable traffic flow via identical AVs. To that end, we ensure the string stability of traffic flow by directly imposing the possession of no growing eigenmodes. Moreover, given a fixed number of heterogeneous HVs, built upon the derived theoretical lower bounds, we systematically find a minimum number of required identical AVs and solve for the optimal control parameters via nonlinear optimization techniques. Precisely, for the case of traffic flow with heterogeneous HVs, we derive three theoretical lower bounds on the AV penetration rate $\gamma$: \textit{(i)} conservative, \textit{(ii)} more conservative, and \textit{(iii)} less conservative. Since computing the conservative lower bound is computationally challenging, we use the more conservative lower bound as a basis to initialize the bisection-based iterative routine---as a crucial part of the main algorithm---specialized for computation of the less conservative lower bound. Finally, through numerical simulations, we verify the effectiveness of the identical AV-based perturbation attenuation---string stability---and capture a trade-off between the stabilization/performance degradation and the number of utilized identical AVs. The information extracted from such a trade-off can be beneficial to traffic control engineers/operators.

\textit{Future Directions:} As pertinent future work, one can think of the investigation of the non-identical AVs scenario and potentially the extraction of the new achievable stability/performance goals via non-identical AVs---with more free control parameters. As other possible future work, a thorough study of a more realistic noisy scenario seems paramount. Moreover, in the future, a thorough trade-off analysis between safety, efficiency (referred to as mobility), and stability can be conducted to shed light on an additional aspect of the conservativeness.

\appendices

\section{Proof of Proposition \ref{Propo1}} \label{App1}
Imposing BCs \eqref{LUB} on the parameterization \eqref{pars}, we get
\begin{subequations} \label{params}
    \begin{align}
        & \beta_3^l \le p \le \beta_3^u,\\
        & \beta_2^l \le p+q \le \beta_2^u,\\
        & \beta_1^l \le r \le \beta_1^u.
    \end{align}
\end{subequations}
We also consider $p > 0$, $q > 0$, and $r > 0$ as $p \ge \epsilon$, $q \ge \epsilon$, and $r \ge \epsilon$, respectively, where $\epsilon > 0$ denotes an infinitesimal value. Then, \eqref{params} along with $p \ge \epsilon$, $q \ge \epsilon$, and $r \ge \epsilon$ implies that
\begin{subequations} \label{pqr}
    \begin{align}
        \epsilon \le & p, \label{pqr1}\\
        \beta_3^l \le & p, \label{pqr2}\\
        & p\le \beta_3^u, \label{pqr3}\\
        & p \le \beta_2^u-\epsilon, \label{pqr4}\\
        \epsilon \le & q, \label{pqr5}\\
        \beta_2^l-p \le & q, \label{pqr6}\\
        & q\le \beta_2^u-p, \label{pqr7}\\
        \epsilon \le & r, \label{pqr8}\\
        \beta_1^l \le & r, \label{pqr9}\\
        & r \le \beta_1^u, \label{pqr10}
    \end{align}
\end{subequations}
hold. Notice that \eqref{pqr4} is obtained from the combination of \eqref{pqr5} and \eqref{pqr7}. Thus, utilizing \eqref{pqr} in conjunction with the convex combination notion, the $p$, $q$, and $r$ in \eqref{pars} satisfying the BCs \eqref{LUB} can be parameterized as \eqref{pqrpar}. Also, we have
\begin{align*}
    & \eqref{pqr1},\eqref{pqr3} \implies \beta_3^u \ge \epsilon,\\
    & \eqref{pqr1},\eqref{pqr4} \implies \beta_2^u \ge 2\epsilon,\\
    & \eqref{pqr2},\eqref{pqr4} \implies \beta_2^u \ge \beta_3^l + \epsilon,\\
    & \eqref{pqr8},\eqref{pqr10} \implies \beta_1^u \ge \epsilon,
\end{align*}
which completes the proof.

\section{Proof of Lemma \ref{Lemma1}} \label{App2}
To find the globally maximizer of $D_{\m \alpha_j}(\omega)$, we must solve $\frac{d D_{\m \alpha_j}(\omega)}{d\omega} = 0$ for $\tilde{\omega}_j$. Since $D_{\m \alpha_j}(\omega)$ is continuous on $\Big[0,\sqrt{-\Delta_{\m \alpha_j}}\Big]$ and differentiable on $\Big(0,\sqrt{-\Delta_{\m \alpha_j}}\Big)$, according to the Rolle's theorem \cite{thomas1992calculus} $D_{\m \alpha_j}(0) = D_{\m \alpha_j}\Big(\sqrt{-\Delta_{\m \alpha_j}}\Big) = 0$ implies that there exists at least one point $\tilde{\omega}_j \in \Big(0,\sqrt{-\Delta_{\m \alpha_j}}\Big)$ for which $\frac{d D_{\m \alpha_j}(\omega)}{d\omega} = 0$ is satisfied. Considering $\frac{d D_{\m \alpha_j}(\omega)}{d\omega} = 0$, we get
\begin{align} \label{Der0}
    & \frac{2\alpha_{j3}^2 \omega}{\alpha_{j3}^2 \omega^2 + \alpha_{j1}^2} - \frac{2 \omega \alpha_{j2}^2 + 4 \omega (\omega^2-\alpha_{j1})}{\alpha_{j2}^2 \omega^2 + (\omega^2-\alpha_{j1})^2} = 0.
\end{align}
For $\omega \neq 0$, after some mathematical manipulations, \eqref{Der0} reduces to the following equation:
\begin{align} \label{Eqw}
    & \alpha_{j3}^2 \omega^4 + 2 \alpha_{j1}^2 \omega^2 + \alpha_{j1}^2 \Delta_{\m \alpha_j} = 0.
\end{align}
Solving \eqref{Eqw} for $\tilde{\omega}_j > 0$ and noting that $\frac{\Delta_{\m \alpha_j}}{\alpha_{j3}^2} < 0$ and $-\frac{2 \alpha_{j1}^2}{\alpha_{j3}^2} < 0$ hold, we get the unique $\tilde{\omega}_j$ in \eqref{tilomeg}. Because $D_{\m \alpha_j}(\omega) > 0$ holds for all $\omega \in \Big(0,\sqrt{-\Delta_{\m \alpha_j}}\Big)$ according to \eqref{SignDforDeltaalpha}, such a unique $\tilde{\omega}_j$ is a global maximizer. Moreover, by evaluating $D_{\m \alpha_j}(\omega)$ at $\tilde{\omega}_j$, we obtain \eqref{GMV}.

\section{Proof of Proposition \ref{Propo2}} \label{App3}
By the definition of $J^{\ast}(\m \beta)$ in \eqref{SuffCon2}, we have
\begin{align} \label{JDD}
    \frac{-D_{\m \beta}(\tilde{\omega}_j)}{D^{\mathrm{AM}}(\tilde{\omega}_j)} \ge J^{\ast}(\m \beta),~\forall j \in \tilde{\mathcal{I}}^{+}_{\mathrm{HV}}.
\end{align}
Taking the minimum from the RHS of \eqref{JDD} completes the proof.

\section{Proof of Proposition \ref{Propo3}} \label{App4}
Let us define $\m \beta^{\min} := \arg \min \{K^{\ast}(\m \beta): \m \beta \in \mathcal{B}_1 \cap \mathcal{B}_2, \Delta_{\m \beta} \ge 0\}$ for which $K^{\ast}(\m \beta^{\min}) = K^{\ast \ast}$ holds. Then, by definition $K^{\ast}(\m \beta) \ge K^{\ast \ast}$ holds for any $\m \beta$ belonging to $\mathcal{B}_1 \cap \mathcal{B}_2 \cap \{\m \beta \in \mathbb{R}_{+}^{3}: \eqref{SuffCon1}\mathrm{~holds~for~}\m \beta\}$. Then, choosing $\hat{\m \beta}$, we get $K^{\ast}(\hat{\m \beta}) \ge K^{\ast \ast}$. We obtain a more conservative (compared to $K^{\ast \ast}$) lower bound on the AV penetration rate by substituting $\hat{\m \beta}$ in \eqref{SuffCon2}, which completes the proof.

\bibliographystyle{IEEEtran}
\bibliography{References_final}

\vspace{-1cm}
\begin{IEEEbiography}[{\includegraphics[width=1in,height=1.25in,clip,keepaspectratio]{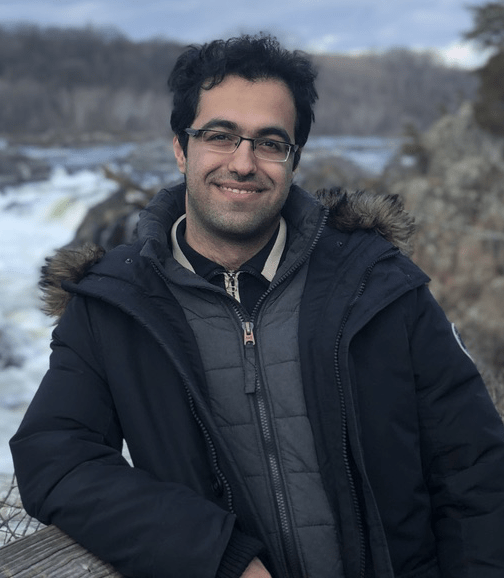}
 }] {MirSaleh Bahavarnia} (Member, IEEE) received a B.Sc. degree in Electrical Engineering (Control) and a certificate from the minor program in Mathematics from the Sharif University of Technology, Tehran, Iran, in 2013 and a Ph.D. degree in Mechanical Engineering (Control) from Lehigh University, Bethlehem, PA, USA, in 2018. He was a Postdoctoral Research Associate with the Department of Electrical and Computer Engineering and the Institute for Systems Research (ISR), University of Maryland, College Park, MD, USA, from 2018 to 2020. He was a Postdoctoral Research Scholar with the Department of Civil and Environmental Engineering, Vanderbilt University, Nashville, TN, USA, from 2022 to 2025. Since 2026, he has been a research scientist with the Department of Civil and Environmental Engineering, Vanderbilt University, Nashville, TN, USA. His research interests include distributed control, feedback control, power systems control, process control, robust control, and traffic control.
\end{IEEEbiography}

\vspace{-1cm}
\begin{IEEEbiography}[{\includegraphics[width=1in,height=1.25in,clip,keepaspectratio]{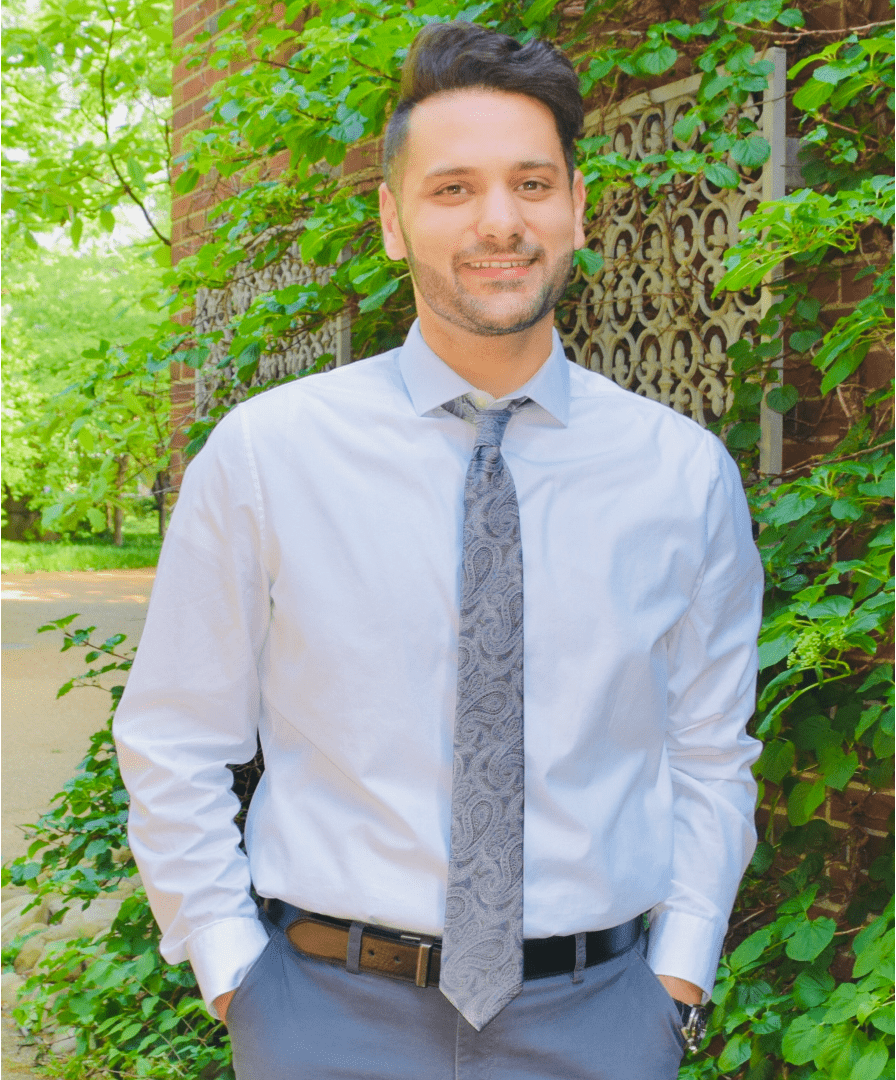}
 }] {Ahmad F. Taha} (Member, IEEE) received the B.E. degree in electrical and computer engineering from the American University of Beirut, Beirut, Lebanon, in 2011, and the Ph.D. degree in electrical and computer engineering from Purdue University, West Lafayette, IN, USA, in 2015. Before joining Vanderbilt University, Nashville, TN, USA, he was an Assistant Professor with the ECE Department at the University of Texas, San Antonio. He is an Associate Professor with the Department of Civil and Environmental Engineering at Vanderbilt University. He has a secondary appointment in the ECE Department. His research interests include understanding how complex cyber-physical and urban infrastructures operate, behave, and occasionally misbehave, and optimization, control, monitoring, and security of infrastructure with power, water, and transportation systems applications. Dr. Taha is an Associate Editor for the \textsc{IEEE Transactions on Control of Network Systems}.

\end{IEEEbiography}

\end{document}